%
%


\documentclass{artjlt}

\usepackage{amsmath,amsfonts,amssymb}

\title{Orbital reducibility and a  generalization\\
of lambda symmetries}                                     
\author{Giampaolo Cicogna\\
Giuseppe Gaeta\\
Sebastian Walcher}                 
\lastname{Cicogna, Gaeta, Walcher}  

\msc{34A05, 34C14, 34A25, 34A26}    

\keywords{symmetry, reduction, vector field}         

\address{G. Cicogna\\
      Dipartimento di Fisica\\
 Universit\`a di Pisa\\
        and INFN, Sezione di Pisa\\
       Largo B.~Pontecorvo 3\\
     I-56127 Pisa, Italy\\
cicogna@df.unipi.it\\
  \\
G. Gaeta\\
 Dipartimento di Matematica\\
Universit\`a degli Studi di Milano\\
via Saldini 50\\
I-20133 Milano, Italy\\
giuseppe.gaeta@unimi.it\\
  \\
S. Walcher\\
Lehrstuhl A f\"ur Mathematik\\
 RWTH Aachen\\
D-52056 Aachen, Germany\\
walcher@matha.rwth-aachen.de
}

%
%


\begin{document}


\maketitle

\begin{abstract}
We review the notion of reducibility and we introduce and discuss the notion of orbital reducibility for autonomous ordinary differential equations of first order. The relation between (orbital) reducibility and (orbital) symmetry is investigated and employed to construct (orbitally) reducible systems. By standard identifications, the notions extend to non-autonomous ODEs of first and higher order. Moreover we thus obtain a generalization of the lambda symmetries of Muriel and Romero. Several examples are given.
\end{abstract}

\section{Introduction and preliminaries}
In the present paper we discuss reducibility and reduction for ordinary differential equations. Our focus of interest is on (explicit) reducibility imparted by some map to a lower-dimensional system (possibly defined on a submanifold of $n$-space).
For equations of higher order, reduction is frequently understood as reduction of order, which will also be considered.
The main purpose of the paper is to consider various notions of reducibility, establish a general framework, and elucidate the relations between symmetry, generalizations such as $\lambda$-symmetry, and reducibility. We clarify and extend notions, generalize results, and obtain new applications. In a related paper \cite{CGWsprol} we present a more thorough discussion of higher order equations, including prolongation formulas.

For autonomous first order differential equations (resp. the associated vector fields) there are well-defined and well-established notions of symmetry (sending parameterized solutions to parameterized solutions) and orbital symmetry (sending solution orbits to solution orbits, and consequently invariant sets to invariant sets). A canonical notion of reducibility (which includes symmetry reduction) was introduced and discussed in  \cite{HaWa}, and we extend this by introducing the notion of orbital reducibility. The latter turns out to correspond to a generalization of $\lambda$-symmetries  (Muriel and Romero; in particular \cite{MuRo1a}). Due to a relation between (orbital) reducibility and (orbital) symmetry there is a canonical construction of (orbitally) reducible systems from (orbitally) symmetric ones, and in some cases it can be shown that all (orbitally) reducible systems are obtained in this manner. For vector fields the notions of "reducibility via some map" discussed in this paper seem to be comprehensive. Moreover, all notions of reducibility for non-autonomous systems or for equations of higher order can be traced back to the case of first order autonomous systems, just as non-autonomous systems or systems of higher order can be written as first-order autonomous systems. 
Reducibility for a non-autonomous first-order system (via a map sending solutions to solutions of some lower dimensional system) amounts to orbital reducibility of an associated autonomous system. This general framework for reduction comprises the main focus of the paper.

Reducibility  of a higher-order equation (in the sense of reducing order)  is equivalent to reducibility of the canonically associated first-order system (in the sense of reducing dimension). In the course of proving this, we note that every (non-autonomous) $m$-dimensional system of order one with nontrivial right-hand side may be rewritten as a single equation of order $m+1$, reverting the usual procedure.

The problem to explicitly determine a reducing (or orbitally reducing) map for a given autonomous equation of first order seems to be just as hard (and as algorithmically inaccessible) as the problem of finding a symmetry (or orbital symmetry). This is essentially due to the straightening theorem and the implicit function theorem, which are not, or not completely, constructive. But the inverse problem to determine all vector fields reducible by a given map is easier to access in some relevant cases.
In particular we transfer the approach from Olver and Rosenau  \cite{OlRo2} to the ordinary differential equation setting, and determine all differential equations which admit reduction by invariants of a given compact and connected group.

Moreover we construct reducible higher-order equations from equations admitting symmetries, in particular Lie point symmetries. In this way we extend the class of reducible equations obtained from lambda-symmetric systems, and provide a different perspective for the latter.

Throughout this paper we restrict attention to analytic functions and vector fields; many of the results can, with some care, be extended to the smooth case. In order to give a self-contained discussion, and to make           the paper accessible to readers with different backgrounds, we include a review (and sometimes rephrasing) of some facts and methods.

\section{Reducibility and orbital reducibility}
We first fix some notation.
Let an (analytic) autonomous ordinary differential equation
\begin{equation}\label{ode}
\dot x = f(x)
\end{equation}
be given on the open and connected subset $U$ of $\mathbb K^n$ (with $\mathbb K$ standing for $\mathbb R$ or $\mathbb C$). 
We denote by $X_f$ the corresponding Lie derivative which acts on analytic functions via
\[
\phi\mapsto X_f(\phi),\quad X_f(\phi)\,(x):=D\phi(x)\,f(x),
\]
and recall that $\phi$ is called a first integral of  \eqref{ode} if $X_f(\phi)=0$.
We distinguish between solutions (including parameterization) and solution orbits (trajectories). If one is primarily interested in orbits, then it is appropriate to consider an equivalence class of differential equations, rather than the single equation \eqref{ode} in case $f\not=0$. Let $Z$ be the zero set of $f$. Then a differential equation defined on the open set  $\widetilde U\subseteq U$ has the same solution orbits on $\widetilde U\setminus Z$ if and only if it has the form
\[
\dot x=\mu(x)\cdot f(x)\quad\mbox{  on }\widetilde U\setminus Z,
\]
with $\mu:\, \widetilde U\setminus Z\to \mathbb K$ analytic and without zeros. (The non-obvious direction holds because a suitable reparameterization of a solution of \eqref{ode} will produce a solution of $\dot x=\mu(x)\cdot f(x)$.) Thus one has the notion of local orbital equivalence for vector fields that are defined on some open and dense subset of $U$. 
\begin{remark}\label{orbfi}{\em 
Two equations are locally orbit-equivalent if and only if they admit the same first integrals near any non-stationary point. Stated in a different way, two equations are locally orbit-equivalent if and only if they admit the same local invariant sets near any non-stationary point.}\quad$\diamond$
\end{remark}
\begin{remark}\label{autonon}{\em 
For a non-autonomous equation
\[
\dot x=q(t,x)\quad\mbox{ on }V\subseteq \mathbb K\times\mathbb K^n
\]
one may define the ''autonomized'' system
\[
\begin{array}{lll}
\dot x_0&=& 1\\
\dot x&=& q(x_0,x)
\end{array}, \quad (x_0,x)\in V.
\]
A reverse to this procedure is obtained as follows: If 
\[
f(x)=\left(\begin{array}{c}
                            f_1(x)\\
                          \vdots\\
                            f_n(x)\end{array}\right)
\]
and one of the components, say $f_1$, has no zero on the open subset $\widehat U$, then passing to an orbit equation
\[
\begin{array}{ccc}
{dx_2}/{dx_1}&=& {f_2(x)}/{f_1(x)}\\
                             &\vdots&                   \\
{dx_n}/{dx_1}&=& {f_n(x)}/{f_1(x)}
\end{array}
\]
provides a non-autonomous system whose autonomization is locally orbitally equivalent to \eqref{ode} on an open and dense subset of $U$. Thus non-autonomous equations in dimension $n$ and (local) orbital equivalence classes of autonomous equations in dimension $n+1$ stand in correspondence.}\quad$\diamond$
\end{remark}
\subsection{Symmetries and orbital symmetries: Review}
A symmetry of the autonomous differential equation \eqref{ode} is a (locally invertible) map sending parameterized solutions to parameterized solutions.  An orbital symmetry of \eqref{ode} is a (locally invertible) transformation mapping solution orbits to solution orbits, hence sending \eqref{ode} to an orbit-equivalent equation $\dot x=\mu(x)\cdot f(x)$. We recall a characterization of infinitesimal (orbital) symmetries; see e.g. Olver \cite{Olv}, Chapter 2, in particular Exercise 2.19, or see \cite{WMul}.
\begin{proposition}\label{symcrit} Let $g$ be a vector field on some open subset of $U$. Then:\\
{\em(a)} The local transformation group generated by $g$ consists of local symmetries of $\dot x=f(x)$ if and only if $\left[g,f\right]=0$.\\
{\em(b)} The local transformation group generated by $g$ consists of local orbital symmetries of $\dot x=f(x)$ if and only if $\left[g,f\right]=\alpha\cdot f$ for some scalar function $\alpha$.
\end{proposition}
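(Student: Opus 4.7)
The plan is to translate (orbital) symmetry of $\dot x = f(x)$ under the local flow $\Phi_s$ of $g$ into an invariance condition for the vector field $f$ itself, and then to pass between this global-in-$s$ condition and its infinitesimal counterpart by differentiation at $s=0$ and by integration of a scalar linear ODE in $s$. First I would observe that, by uniqueness of solutions of ODEs, $\Phi_s$ sends parameterized solutions of $\dot x=f(x)$ to parameterized solutions (for all small $s$) if and only if $D\Phi_s(x)\,f(x)=f(\Phi_s(x))$ for every $x$, while $\Phi_s$ sends solution orbits to solution orbits (in the sense of the discussion preceding Remark \ref{orbfi}) if and only if this identity holds up to a nowhere-vanishing scalar factor, i.e., $D\Phi_s(x)\,f(x)=\rho_s(x)\,f(\Phi_s(x))$ with $\rho_0\equiv 1$. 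Introducing the pullback of vector fields $(\Phi_s^*f)(x):=D\Phi_s(x)^{-1}f(\Phi_s(x))$, these two conditions take the clean form $\Phi_s^*f=f$ and $\Phi_s^*f=\sigma_s\cdot f$ respectively, where $\sigma_s:=1/\rho_s$.

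The necessity direction in both parts would then follow by differentiating these conditions at $s=0$ and invoking the standard identity
\[
\tfrac{d}{ds}\bigr|_{s=0}\Phi_s^*f\;=\;[g,f].
\]
In (a) this immediately gives $[g,f]=0$; in (b) it gives $[g,f]=\alpha\cdot f$ with $\alpha:=\dot\sigma_0$. For sufficiency, the central tool I would use is the evolution equation
\[
\tfrac{d}{ds}\Phi_s^*f\;=\;\Phi_s^*[g,f],
\]
a consequence of the semigroup property $\Phi_{s+h}=\Phi_s\circ\Phi_h$ combined with the chain rule at $h=0$. If $[g,f]=0$ the right-hand side vanishes, so $\Phi_s^*f\equiv f$, establishing (a). If $[g,f]=\alpha f$, then $\Phi_s^*[g,f]=(\Phi_s^*\alpha)(\Phi_s^*f)$ and the identity reduces to the scalar linear ODE $\tfrac{d}{ds}\Phi_s^*f=(\Phi_s^*\alpha)\,\Phi_s^*f$; its solution along the line $\mathbb{K}\cdot f(x)$ is $\Phi_s^*f=\bigl(\exp\int_0^s\Phi_r^*\alpha\,dr\bigr)f$, a nowhere-vanishing scalar multiple of $f$, which establishes (b).

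No step in this plan presents a genuine obstacle. The only points requiring attention are the translation from ``$\Phi_s$ maps solutions/orbits to solutions/orbits'' into a pointwise invariance of $f$ (handled by uniqueness of ODE solutions) and the evolution identity for $\Phi_s^*f$ (a standard consequence of the semigroup property and the chain rule). Domain-of-definition questions are purely local, and for the orbital statement in (b) one tacitly restricts to the complement of the zero set of $f$, where $\sigma_s$ is well-defined and nowhere-vanishing.
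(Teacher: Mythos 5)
Your argument is correct. Note, however, that the paper does not prove this proposition at all: it is stated as a recalled fact with a pointer to Olver (Ch.~2, Exercise~2.19) and to \cite{WMul}. Your proof is precisely the standard flow-theoretic argument those sources rely on: translate the (orbital) symmetry property of the flow $\Phi_s$ into the pullback condition $\Phi_s^*f=f$ (resp.\ $\Phi_s^*f=\sigma_s f$), differentiate at $s=0$ to get necessity, and integrate the evolution identity $\tfrac{d}{ds}\Phi_s^*f=\Phi_s^*[g,f]$ to get sufficiency, the case $[g,f]=\alpha f$ reducing to a scalar linear ODE along the line $\mathbb K\cdot f(x)$, whose solution is a nowhere-vanishing exponential multiple of $f(x)$. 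The only step worth a word is the necessity direction of (b): you differentiate $\sigma_s$ at $s=0$, so you should note that $s\mapsto\sigma_s(x)$ is indeed smooth (near a non-stationary point one can write, say, $\sigma_s=\left<\Phi_s^*f,\,f\right>/\left<f,\,f\right>$, which inherits regularity from the flow); with that remark, and the restriction to the complement of the zero set of $f$ which you already make explicit, the proof is complete.
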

For non-autonomous equations
\[
\dot x=q(t,x)\quad\mbox{ on }V\subseteq \mathbb K\times\mathbb K^n
\]
the usual definition of a symmetry is that of a locally invertible map defined on some open subset of $V$ (thus transforming both $t$ and $x$) such that solutions are mapped to solutions, see e.g. Olver \cite{Olv}. Equivalently, by the above Proposition and Remarks \ref{orbfi} and \ref{autonon}, such a map is an orbital symmetry for any autonomized system.
Thus, necessary and sufficient conditions defining an infinitesimal symmetry of a non-autonomous first-order equation are known.

\subsection{Reducibility: Basic notions and results}
The intent underlying any notion of "reducibility by some map" for equation \eqref{ode} is rather obvious, as noted in the Introduction. But details have to be specified. The following particular form was introduced in \cite{HaWa}.
\begin{Definition}\label{redudef}
We call the equation \eqref{ode} {\em reducible on $U$} if there exist a number $m$, $0<m<n$, a positive integer $r$, an analytic map $\Psi:\,U\to \mathbb K^r$, and a differential equation
\[
\dot y=h(y)
\]
defined on an open neighborhood $\widetilde U$ of $\Psi(U)$ such that $\Psi$ maps parameterized solutions of $\dot x=f(x)$ to parameterized solutions of $\dot y=h(y)$, and the derivative $D\Psi(x)$ has rank $\leq m$ on $U$, with rank $=m$ at some point.
\end{Definition}
 The solution-preserving property is equivalent to the identity
\begin{equation}\label{reducond}
D\Psi(x)f(x)=h(\Psi(x))\mbox{    on }U.
\end{equation}
Moreover, due to our assumptions the derivative $D\Psi(x)$ has maximal rank $m$ on an open-dense subset of $U$.

This definition of reducibility is designed to include interesting cases, like reduction by group invariants, which do not a priori provide a map to a vector space of smaller dimension. But locally near any maximal rank point, one has reduction to $\mathbb K^m$, and the structure of reducible vector fields is quite simple. In this sense, the problem can be locally ''trivialized''.

\begin{lemma}\label{loctriv} Let $y\in U$ such that $D\Psi(y)$ has rank $m$, and let $\psi_1,\ldots,\psi_r$ denote the entries of $\Psi$. Then there is a neighborhood $\widetilde U$ of $y$ such that up to a coordinate transformation one may assume that $\psi_1=x_1,\ldots,\psi_m=x_m$, and $\widehat \Psi:=\left(\psi_1,\ldots,\psi_m\right)^{\rm tr}$ is then a reducing map to $\mathbb K^m$. Moreover, up to this coordinate transformation one has
\[
f(x)=\left(\begin{array}{c}f_1(x_1,\ldots, x_m)\\
                                              \vdots\\
                                         f_m(x_1,\ldots, x_m)\\
                                                 \ast \\
                                              \vdots\\
                                                  \ast\end{array}\right)
\]
where the asterisks symbolize functions of all variables $x_1,\ldots, x_n$.
\end{lemma}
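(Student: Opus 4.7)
The plan is to straighten the first $m$ entries of $\Psi$ into coordinate functions via the inverse function theorem, and then exploit the global rank bound to force the remaining entries of $\Psi$---and in turn the first $m$ components of $f$---to depend only on $x_1,\ldots,x_m$.

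First I would select an $m$-tuple among $\psi_1,\ldots,\psi_r$ whose differentials at $y$ are linearly independent; after relabeling, assume these are $\psi_1,\ldots,\psi_m$. The map
\[
\Phi:\;x\longmapsto \bigl(\psi_1(x),\ldots,\psi_m(x),x_{m+1},\ldots,x_n\bigr)
\]
has invertible derivative at $y$, so the inverse function theorem yields a neighborhood $\widetilde U$ of $y$ on which $\Phi$ is an analytic diffeomorphism. In the resulting coordinates (still denoted $x_1,\ldots,x_n$) one has $\psi_i(x)=x_i$ for $i\le m$.

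Next, because $\operatorname{rank} D\Psi\le m$ throughout $U$ while equaling $m$ at $y$, lower semicontinuity of rank (a suitable $m\times m$ minor is nonzero at $y$, hence nearby) gives $\operatorname{rank} D\Psi\equiv m$ on a possibly smaller $\widetilde U$. In the new coordinates the Jacobian has block form
\[
D\Psi(x) = \begin{pmatrix} I_m & 0 \\ A(x) & B(x) \end{pmatrix},\qquad B(x)=\bigl(\partial\psi_j/\partial x_k\bigr)_{m<j\le r,\;m<k\le n},
\]
so $\operatorname{rank} D\Psi(x)=m+\operatorname{rank} B(x)$ forces $B\equiv 0$ on $\widetilde U$. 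Hence $\psi_{m+1},\ldots,\psi_r$ are functions of $x_1,\ldots,x_m$ alone, and in particular $\widehat\Psi=(x_1,\ldots,x_m)^{\rm tr}$ is a reducing map into $\mathbb K^m$.

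Finally, the shape of $f$ is read off from the first $m$ rows of the reducibility identity \eqref{reducond}: they give $f_i(x)=h_i(\Psi(x))$ for $i=1,\ldots,m$, and since every entry of $\Psi$ depends by the previous step only on $x_1,\ldots,x_m$, so does each $f_i$ with $i\le m$; no constraint is imposed on $f_{m+1},\ldots,f_n$. The main---and essentially only---obstacle is the block rank identity forcing $B\equiv 0$; once that is in place, the straightening plus \eqref{reducond} yields the claim immediately.
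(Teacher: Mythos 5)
Your proposal is correct and takes essentially the same route as the paper's proof: straighten $\psi_1,\ldots,\psi_m$ into coordinate functions via the inverse (implicit) function theorem, use the global rank bound to conclude that $\psi_{m+1},\ldots,\psi_r$ depend on $x_1,\ldots,x_m$ only, and read off the form of $f$ from the first $m$ rows of \eqref{reducond}. You merely make explicit the block-rank computation that the paper leaves implicit (modulo the standard relabeling of the $x$-coordinates needed so that the relevant $m\times m$ minor is invertible).
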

\begin{proof} We may assume by the implicit function theorem and the rank condition that $\psi_1=x_1,\ldots,\psi_m=x_m$. Since the matrix $(D\psi_1,\ldots, D\psi_r)^{\rm tr}$ has rank $m$, the functions $\psi_{m+1},\ldots, \psi_r$ depend on $x_1,\ldots,x_m$ only. Then by the reducing property $f_1,\ldots, f_m$ can depend on $x_1,\ldots,x_m$ only, and all assertions follow.
\end{proof}
\begin{remark} {\em (a) This result can be refined. According e.g. to \cite{WMul}, Prop. 3.4 every set 
\[
\left\{x\in U;\,{\rm rank}\left(D\Psi(x)\right)=q\right\}
\]
is invariant for $\dot x = f(x)$ and locally a submanifold of $U$. Restriction to such sets thus suggests, and in principle allows, a case-by-case approach to reduction. \\
(b) The problem of reducibility can be "trivialized" in yet another way: By the straightening theorem, $f$ may be transformed to a constant vector field near any non-stationary point, for which reducibility (to any dimension $\geq 1$) is obvious. This observation shows that the set $U$ in Definition \ref{redudef} may play an important role. Furthermore, explicit determination of a reducing map for a given equation (or explicit determination of all vector fields reducible by a given map) is a different matter, and reducibility is a nontrivial property near stationary points (see also \cite{HaWa}).\quad$\diamond$
}
\end{remark}
The above trivialization results are of little practical relevance, due to their reliance on non-constructive theorems. Their principal value lies in providing insight into the local structure of reducing maps and reducible vector fields.

We next rephrase and generalize some results of  \cite{HaWa}, Section 2, about the correspondence between reducing maps and involution systems. It seems appropriate to start with a relatively abstract statement, to clarify the relevant properties of the underlying algebraic structures. 
To motivate the role of function algebras in the following Theorem, note that for a given reducing map $\Psi=\left(\psi_1,\ldots,\psi_r\right)^{\rm tr}$ according to Definition \ref{redudef}, the algebra of all functions $\rho(\psi_1,\ldots,\psi_r)$, $\rho$ analytic in $r$ variables, will be mapped to itself by $X_f$. Thus function algebras are naturally associated to reducing maps.

\begin{theorem}\label{basicred}  Given the analytic differential equation $\dot x=f(x)$ on $U$, let $\widetilde U\subseteq U$ be open and connected, and $A(\widetilde U)$ the algebra of analytic functions from $\widetilde U$ to $\mathbb K$.\\
{\em (a)} Let $\mathcal M$ be a Lie algebra of vector fields on $\widetilde U$ and denote by $A(\widetilde U){\mathcal M}$ the $A(\widetilde U)$-module generated by $\mathcal M$. Let $I({\mathcal M})\subseteq A(\widetilde U)$ denote the algebra of invariants of $\mathcal M$, thus $X_g(\phi)=0$ for all $g\in\mathcal M$ and all $\phi\in I({\mathcal M})$. Then for any vector field $f$ on $\widetilde U$ one has
\[
\left[f,\,{\mathcal M}\right]\subseteq A(\widetilde U){\mathcal M}\Longrightarrow X_f(I({\mathcal M}))\subseteq I({\mathcal M}).
\]
In particular, if $f$ normalizes $\mathcal M$ then $X_f$ stabilizes $I({\mathcal M})$.\\
 If, moreover, there are finitely many $\sigma_1,\ldots,\sigma_r\in I({\mathcal M})$ such that every element of $I({\mathcal M})$ can be expressed as an analytic function of the $\sigma_j$ then one obtains a reducing map $\left(\sigma_1,\ldots,\sigma_r\right)^{\rm tr}$ for $\dot x =f(x)$.\\
{\em (b)} Conversely, let $B$ be a subalgebra of $A(\widetilde U)$, and $f$ a vector field such that $X_f(B)\subseteq B$. Then $f$ normalizes the Lie algebra
\[
{\mathcal L}(B)=\left\{g;\,X_g(B)=0\right\}.
\]
Moreover ${\mathcal L}(B)$ is a module over $A(\widetilde U)$. \\
If, moreover, this module is finitely generated, say by  $g_1, \ldots, g_s$, then the $g_i$ are in involution on $\widetilde U$, thus there are $\mu_{ijk}\in A(\widetilde U)$ such that for all $i$ and $j$ relations
\[
\left[g_i,g_j\right] =\sum_k \mu_{ijk}g_k
\]
hold.
\end{theorem}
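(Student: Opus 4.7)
The backbone of both parts is the commutator identity $[X_g, X_f] = X_{[g,f]}$ acting on analytic functions, together with the elementary observation that for $\beta \in A(\widetilde U)$ and any vector field $g$, the Lie derivative satisfies $X_{\beta g}(\phi) = \beta \cdot X_g(\phi)$.

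For part (a), I would fix $\phi \in I(\mathcal{M})$ and $g \in \mathcal{M}$ and compute
\[
X_g(X_f(\phi)) = X_f(X_g(\phi)) + X_{[g,f]}(\phi) = X_{[g,f]}(\phi),
\]
where the second equality uses $X_g(\phi)=0$. Writing $[g,f] = \sum_j \beta_j h_j$ with $h_j \in \mathcal{M}$ and $\beta_j \in A(\widetilde U)$, as permitted by the hypothesis $[f,\mathcal{M}] \subseteq A(\widetilde U)\mathcal{M}$, the right-hand side becomes $\sum_j \beta_j X_{h_j}(\phi) = 0$. Hence $X_f(\phi) \in I(\mathcal{M})$, which contains the normalizer case as a special instance. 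Given finitely many $\sigma_j \in I(\mathcal{M})$ whose analytic envelope is all of $I(\mathcal{M})$, each $X_f(\sigma_j)$ then admits a representation $h_j(\sigma_1,\ldots,\sigma_r)$; assembling the $\sigma_j$ into $\Psi = (\sigma_1,\ldots,\sigma_r)^{\mathrm{tr}}$ and the $h_j$ into $h$ yields exactly the reducing identity \eqref{reducond}. Finally, setting $m$ to be the maximal rank attained by $D\Psi$ on $U$ fulfills the rank conditions of Definition \ref{redudef}, provided $\mathcal{M}$ possesses at least one non-stationary point (forcing $m<n$, since invariants are constant along the flows in $\mathcal{M}$) and the $\sigma_j$ are not all constant (forcing $m>0$).

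For part (b), the same identity applied to $g \in \mathcal{L}(B)$ and $\phi \in B$ gives
\[
X_{[f,g]}(\phi) = X_f(X_g(\phi)) - X_g(X_f(\phi)) = 0,
\]
since $X_g(\phi) = 0$ by definition of $\mathcal{L}(B)$ and $X_f(\phi) \in B$ by hypothesis on $f$. Thus $[f,g] \in \mathcal{L}(B)$, i.e., $f$ normalizes $\mathcal{L}(B)$. For the module structure one checks directly that $X_{\alpha g}(\phi) = \alpha \cdot X_g(\phi) = 0$ for every $\alpha \in A(\widetilde U)$, $g \in \mathcal{L}(B)$, and $\phi \in B$. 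Replacing $f$ by any element of $\mathcal{L}(B)$ in the same bracket computation shows that $\mathcal{L}(B)$ is closed under the Lie bracket; consequently, if $g_1,\ldots,g_s$ generate $\mathcal{L}(B)$ as an $A(\widetilde U)$-module, then every $[g_i,g_j] \in \mathcal{L}(B)$ admits an expansion $\sum_k \mu_{ijk}\,g_k$ with $\mu_{ijk} \in A(\widetilde U)$, which is the required involution relation.

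The only step that is not pure bracket bookkeeping is the rank verification in part (a): the passage from the algebraic statement ``every invariant is an analytic function of $\sigma_1,\ldots,\sigma_r$'' to the rank condition in Definition \ref{redudef} rests on the geometric fact that the generic rank of an analytic map is attained on an open dense subset, together with invariance of the $\sigma_j$ along $\mathcal{M}$-orbits. I expect this to be the most delicate point, although it is short; everything else follows mechanically from $[X_g,X_f] = X_{[g,f]}$ and the $A(\widetilde U)$-linearity of the Lie derivative in its vector-field argument.
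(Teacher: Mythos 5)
Your proof is correct and follows essentially the same route as the paper's: both parts rest on the identity $X_{[g,f]}=X_gX_f-X_fX_g$ applied to invariants, respectively to elements of $B$. The extra details you supply (the explicit expansion $[g,f]=\sum_j\beta_j h_j$, the module structure of ${\mathcal L}(B)$, the assembly of the reducing map, and the caveats about the rank condition in Definition \ref{redudef}) are all points the paper leaves implicit, and you handle them correctly.
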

\begin{proof} To verify the nontrivial assertion of (a), let $g\in \mathcal M$ and $\psi\in I({\mathcal M})$. Then by hypothesis,
\[
0=X_{[g,\,f]}(\psi)=X_gX_f(\psi)-X_fX_g(\psi)=X_gX_f(\psi)
\]
and therefore $X_f(\psi)\in I({\mathcal M})$.
Concerning (b), note that for every $\rho\in B$ and every $g\in{\mathcal L}(B)$ one has
\[
X_{[g,f]}(\rho)=X_gX_f(\rho)-X_fX_g(\rho)=0
\]
in view of $\rho\in B$, $X_f(\rho)\in B$.
\end{proof}
\begin{remark}{\em  (a) There are obvious modifications of the Theorem for germs of local analytic functions and vector fields, resp. for polynomial and rational functions and vector fields.\\
(b) An important class of examples is formed by the systems symmetric with respect to a Lie algebra ${\mathcal M}$, thus $\left[f,\,{\mathcal M}\right]=0$.\quad$\diamond$ }
\end{remark}

Locally, the finite generation property holds in many cases, but such results are only partly constructive:
\begin{corollary}\label{finitered} Let the analytic differential equation $\dot x=f(x)$ be given on $U$, and let $y\in U$.

\noindent{\em (a)} If there is a rank $s$ analytic  involution system $g_1, \ldots, g_s$ in a neighborhood of $y$ (thus the $g_i$ are in involution and $g_1(y),\ldots, g_s(y)$ span an $s$-dimensional subspace of $\mathbb K^n$), and there exist analytic functions $\lambda_{ij}$ such that 
\[
\left[g_i,f\right] =\sum_j \lambda_{ij}g_j
\]
then there is a local reducing map $\Psi$, whose entries are common first integrals of the $g_i$, to some equation on an open subset of $\mathbb K^{n-s}$, with rank $n-s$.

\noindent{\em (b)} If there is an analytic reducing map $\Psi$ as defined in \eqref{reducond}, and $D\Psi(y)$ has maximal rank $m$, then there is an analytic involution system of rank $n-m$, defined in some neighborhood of $y$, such that the entries of $\Psi$ are common first integrals of this involution system, and the identities from part (a) hold.
\end{corollary}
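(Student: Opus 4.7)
The plan is to handle part (a) by a classical Frobenius-type argument combined with Theorem \ref{basicred}(a), and part (b) by invoking the local normal form provided by Lemma \ref{loctriv}, with the involution system taken to be the coordinate derivatives in the complementary directions.

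For part (a), the involutivity assumption together with the rank $s$ hypothesis means that the distribution spanned by $g_1,\ldots,g_s$ is involutive of constant rank $s$ in a suitable neighborhood $\widetilde U$ of $y$. The analytic Frobenius theorem then supplies $n-s$ analytic functions $\sigma_1,\ldots,\sigma_{n-s}$, functionally independent at $y$, such that each $\sigma_k$ is a common first integral of the $g_i$ and such that every common first integral on $\widetilde U$ is locally an analytic function of $\sigma_1,\ldots,\sigma_{n-s}$. Setting $\mathcal M:=\mathrm{span}_{\mathbb K}(g_1,\ldots,g_s)$, the hypothesis $[g_i,f]=\sum_j\lambda_{ij}g_j$ says exactly that $[f,\mathcal M]\subseteq A(\widetilde U)\mathcal M$, so Theorem \ref{basicred}(a) yields $X_f(\sigma_k)\in I(\mathcal M)$. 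By the generation property just mentioned, each $X_f(\sigma_k)$ is an analytic function of $\sigma_1,\ldots,\sigma_{n-s}$, so $\Psi:=(\sigma_1,\ldots,\sigma_{n-s})^{\mathrm{tr}}$ satisfies \eqref{reducond} with some analytic $h$ on $\widetilde U$ and has rank $n-s$ at $y$.

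For part (b), apply Lemma \ref{loctriv} to choose local coordinates near $y$ in which $\psi_1=x_1,\ldots,\psi_m=x_m$; the remaining $\psi_{m+1},\ldots,\psi_r$ and the components $f_1,\ldots,f_m$ then depend only on $x_1,\ldots,x_m$. Define
\[
g_j:=\frac{\partial}{\partial x_{m+j}},\qquad j=1,\ldots,n-m.
\]
These commuting coordinate fields are in involution and have rank $n-m$ everywhere; since every $\psi_i$ depends only on $x_1,\ldots,x_m$, the $\psi_i$ are common first integrals of $g_1,\ldots,g_{n-m}$. A direct bracket computation gives $[g_j,f]=\partial f/\partial x_{m+j}$, and by the block structure of $f$ the first $m$ components of this vector field vanish. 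Therefore $[g_j,f]$ lies in the $A(\widetilde U)$-span of $g_1,\ldots,g_{n-m}$, which produces the required coefficients $\lambda_{ji}$ and establishes the identities of part (a).

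The main technical point is the invocation of the analytic Frobenius theorem in part (a), specifically the statement that locally every common first integral of an involutive constant-rank distribution is an analytic function of $n-s$ functionally independent ones. This is standard in the analytic category and is used (implicitly) in \cite{HaWa}. Granted that, the proof is a straightforward combination of Theorem \ref{basicred}(a) with the local normal form from Lemma \ref{loctriv}.
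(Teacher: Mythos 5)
Your proof is correct, but part (b) takes a genuinely different route from the paper's. For part (a) you follow essentially the paper's argument (Frobenius plus Theorem \ref{basicred}(a)), only spelled out in more detail; one cosmetic caveat is that $\mathrm{span}_{\mathbb K}(g_1,\ldots,g_s)$ need not be a Lie algebra when the involution coefficients $\mu_{ijk}$ are non-constant, so you should take $\mathcal M$ to be the $A(\widetilde U)$-module generated by the $g_i$ (which is closed under brackets by involutivity); the computation in the proof of Theorem \ref{basicred}(a) goes through unchanged. For part (b) the paper does not pass to normal-form coordinates: it constructs the involution system directly as an analytic basis $g_1,\ldots,g_{n-m}$ of the solution space of the linear system $D\Psi(x)q(x)=0$ over the quotient field of the ring of analytic functions, gets involutivity because brackets of solutions are again solutions, and derives the $\left[g_i,f\right]$ identities from Theorem \ref{basicred}(b). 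Your version instead invokes Lemma \ref{loctriv} and takes $g_j=\partial/\partial x_{m+j}$ in the straightened coordinates, which makes the bracket identities an immediate block-structure computation (one should add that the system is then pulled back to the original coordinates by the analytic coordinate change, which preserves brackets and first integrals). The trade-off: your argument is shorter and more transparent, but it relies on the non-constructive implicit function theorem, whereas the paper's kernel-basis construction is explicitly computable from $\Psi$ --- a point the authors emphasize in the remark immediately following the corollary ("in contrast to Frobenius --- a corresponding involution system can be determined explicitly from the reducing map"), and which your proof would not deliver.
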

\begin{proof} Part (a) is a direct consequence of Frobenius' theorem (see e.g. Olver \cite{Olv}, Section 1.3). For part (b) consider, in a suitable neighborhood of $y$, the homogeneous system of linear equations
\[
D\Psi(x)q(x)=0
\]
(over the quotient field of the ring of analytic functions) and determine a basis $g_1,\ldots,g_{n-m}$ of the solution space. Obviously one may choose a basis consisting of analytic vector fields. Since the $\left[g_i,g_j\right]$ are also solutions of this linear system, they are linear combinations of $g_1,\ldots,g_{n-m}$. The identities involving $\left[g_i,f\right]$ follow from Theorem \ref{basicred}, in view of the fact that any common first integral of the $g_i$ is locally  a function of the $\psi_j$; see Lemma \ref{loctriv}.
\end{proof}

\begin{remark} {\em The proof of part (b) shows that - in contrast to Frobenius - a corresponding involution system can be determined explicitly from the reducing map.\quad$\diamond$}

\end{remark}
To finish this subsection, we discuss the relation between reducible systems and symmetric ones.
\begin{proposition}\label{easycon}  Let the analytic differential equation $\dot x=f(x)$ be given on $U$, and assume that there is an analytic  involution system $g_1, \ldots, g_s$ in the open subset $\widetilde U$ such that $\left[f,\,g_i\right]=0$ for $1\leq i \leq s$. Then, given arbitrary analytic functions 
$\mu_1,\ldots,\mu_s$ on $\widetilde U$, the vector field
\[
f^*:=f+\sum \mu_i g_i
\]
is reducible by the common invariants of $g_1,\ldots,g_s$.
\end{proposition}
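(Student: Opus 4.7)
The plan is to verify the hypotheses of Theorem \ref{basicred}(a) (equivalently Corollary \ref{finitered}(a)) for the modified field $f^{*}$, with the Lie algebra taken to be (the module generated by) the involution system $g_{1},\ldots,g_{s}$. Let $\mathcal{M}$ denote the $A(\widetilde U)$-module generated by $g_{1},\ldots,g_{s}$. Because of the involution relations and the Leibniz identity
\[
[\rho\, g_{i},\, \sigma\, g_{j}] \;=\; \rho\sigma\,[g_{i},g_{j}] + \rho\,(X_{g_{i}}\sigma)\,g_{j} - \sigma\,(X_{g_{j}}\rho)\,g_{i},
\]
this module is closed under the Lie bracket, so it is a Lie subalgebra of the vector fields on $\widetilde U$ and Theorem \ref{basicred}(a) is applicable.

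Next I would carry out the bracket computation that actually uses the commutativity of $f$ with each $g_{i}$. Using bilinearity and Leibniz once more,
\[
[f^{*},\, g_{i}] \;=\; [f,g_{i}] + \sum_{j}[\mu_{j} g_{j},\,g_{i}] \;=\; \sum_{j}\mu_{j}\,[g_{j},g_{i}] \;-\; \sum_{j}(X_{g_{i}}\mu_{j})\,g_{j}.
\]
The first term lies in $\mathcal{M}$ by involution, and the second term is manifestly in $\mathcal{M}$, so $[f^{*},\mathcal{M}]\subseteq \mathcal{M}=A(\widetilde U)\mathcal{M}$.

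From here Theorem \ref{basicred}(a) gives $X_{f^{*}}\bigl(I(\mathcal{M})\bigr)\subseteq I(\mathcal{M})$, i.e.\ the algebra of common first integrals of $g_{1},\ldots,g_{s}$ is stabilized by $X_{f^{*}}$. If one picks generators $\sigma_{1},\ldots,\sigma_{r}$ of $I(\mathcal{M})$, then by the identity in \eqref{reducond} the map $\Psi=(\sigma_{1},\ldots,\sigma_{r})^{\mathrm{tr}}$ reduces $\dot x = f^{*}(x)$ to a lower-dimensional equation on $\Psi(\widetilde U)$, as required by Definition \ref{redudef}.

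The main issue is less a computational obstacle than a matter of being careful about finite generation of $I(\mathcal{M})$: in general it only holds locally. Restricting to a neighborhood of a point where $g_{1}(x),\ldots,g_{s}(x)$ are linearly independent, Frobenius' theorem (as invoked in Corollary \ref{finitered}(a)) yields $n-s$ functionally independent common first integrals, and one obtains a local reducing map of maximal rank $n-s$, making the conclusion of the proposition concrete.
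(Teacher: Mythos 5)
Your proposal is correct and follows essentially the same route as the paper: compute the bracket of $f^{*}$ with each $g_{i}$, use $[f,g_{i}]=0$ together with the Leibniz rule and the involution relations to see that the result lies in the module generated by the $g_{j}$, and then invoke Theorem \ref{basicred} to conclude that the common invariants furnish a reducing map. Your additional remarks on the module being a Lie subalgebra and on local finite generation via Frobenius are sensible amplifications but do not change the argument.
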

\begin{proof} For any $k$ one has 
\[
\left[g_k,\,f^*\right]= \sum_i \left(X_{g_k}(\mu_i)g_i+\mu_i\left[g_k,g_i\right]\right)
\]
due to $\left[g_k,\,f\right]=0$. The assertion follows from Theorem \ref{basicred}.
\end{proof}
Stating a converse to Proposition \ref{easycon} is not a straightforward matter. While an involution system uniquely determines the analytic invariants, the invariants do not determine a unique (finite) involution system. (Incidentally, this observation may be used to prove Frobenius' theorem; see for instance Hermann \cite{Her}.) But the following statement holds.
\begin{proposition}\label{easyconcon}
Let $g_1,\ldots, g_s$ form an analytic involution system on an open set $U$ such that the vector field $f^*$ is reducible by the common invariants of the $g_i$. Then locally, near any point of $U$ where $(g_1,\ldots, g_s)$ has maximal rank $s$, the module spanned by the $g_i$ has a basis $\hat g_i=\sum_j \sigma_{ij}g_j$, with analytic functions $\sigma_{ij}$, $1\leq i,\,j\leq s$, and there exist analytic functions $\mu_i$ such that $f:=f^*-\sum\mu_j\hat g_j$ satisfies $\left[f,\,\hat g_i\right]=0$ for all $i$.
\end{proposition}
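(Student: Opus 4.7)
The plan is to rectify the involution system via Frobenius' theorem and then extract both the new basis $(\hat g_i)$ and the coefficients $(\mu_i)$ from the coordinate form of $f^*$ in the straightened coordinates. Fix a point $y \in U$ at which $(g_1,\ldots,g_s)$ has maximal rank $s$. Corollary~\ref{finitered}(a) then provides local coordinates $x_1,\ldots,x_n$ near $y$ together with an invertible $s\times s$ matrix $(\sigma_{ij})$ of analytic functions such that
\[
\hat g_i \;:=\; \sum_{j=1}^{s}\sigma_{ij}\,g_j \;=\; \frac{\partial}{\partial x_{n-s+i}},\qquad i=1,\ldots,s.
\]
The invertibility of $(\sigma_{ij})$ ensures that the $\hat g_i$ form a new basis of the $A(\widetilde U)$-module spanned by the $g_j$, and in these coordinates the algebra $I(\mathcal M)$ of common invariants consists locally of the analytic functions of $x_1,\ldots,x_{n-s}$ alone.

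Combining the reducibility hypothesis with the local functional completeness of $I(\mathcal M)$, I would then argue that $X_{f^*}$ stabilizes $I(\mathcal M)$. Testing on $x_1,\ldots,x_{n-s} \in I(\mathcal M)$ gives $f^*_j = X_{f^*}(x_j) \in I(\mathcal M)$ for $j = 1,\ldots,n-s$; in other words, the first $n-s$ components of $f^*$ depend only on $x_1,\ldots,x_{n-s}$ in these coordinates. I would then define $\mu_i := f^*_{n-s+i}$, the $(n-s+i)$-th component of $f^*$, and set $f := f^* - \sum_{i=1}^{s} \mu_i\,\hat g_i$. This $f$ has vanishing last $s$ components, while its first $n-s$ components still depend only on $x_1,\ldots,x_{n-s}$. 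Consequently $[\hat g_i,f] = \partial f/\partial x_{n-s+i} = 0$ for every $i$, as required.

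The step I expect to demand the most care is the passage from the bare reducibility of $f^*$ to the stronger conclusion $X_{f^*}(I(\mathcal M)) \subseteq I(\mathcal M)$. A given reducing map $\Psi$ with entries in $I(\mathcal M)$ directly yields only stability of the subalgebra generated by the $\psi_k$, which a priori may be proper. The remedy lies in the local observation that, near a maximal-rank point of $\mathcal M$, the algebra $I(\mathcal M)$ is generated as an analytic algebra by any $n-s$ functionally independent common invariants; the entries of $\Psi$ can therefore be supplemented to a rank-$(n-s)$ reducing map (extending the reduced right-hand side $h$ accordingly by the chain rule), at which point $X_{f^*}$ must send every common invariant back into $I(\mathcal M)$. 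Once this stability is secured, the rest of the argument is the coordinate bookkeeping above.
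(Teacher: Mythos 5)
Your argument is correct and is essentially the paper's own proof: both pass to local coordinates in which the common invariants are the first $n-s$ coordinates and the module spanned by the $g_i$ is spanned by the last $s$ coordinate vector fields, then split $f^*$ componentwise into an invariant-depending part $f$ and a module part $\sum\mu_i\hat g_i$. The only (immaterial) difference is that you obtain this normal form by straightening the involution system via Frobenius, whereas the paper cites Lemma \ref{loctriv}, which straightens the reducing map; your closing remark on why the first $n-s$ components of $f^*$ depend only on $x_1,\ldots,x_{n-s}$ is exactly the content of that lemma.
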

\begin{proof}
Use Lemma \ref{loctriv}, with invariants $x_1,\ldots,x_m$ and module basis $e_{m+1},\ldots, e_n$. Then (with the same notation as in the proof of Lemma \ref{loctriv})
\[
f^*(x)=\left(\begin{array}{c}f_1(x_1,\ldots,x_m)\\
                                               \vdots\\
                                             f_m(x_1,\ldots,x_m)\\
                                                   0\\
                                              \vdots\\
                                                  0\end{array}\right)+\left(\begin{array}{c}0\\
                                               \vdots\\
                                             0\\
                                                   \ast\\
                                              \vdots\\
                                                  \ast\end{array}\right)
\]
provides the asserted decomposition.
\end{proof}
\begin{corollary}\label{easyconcor} In the special case $s=1$, $g=g_1$
a vector field $f^*$ is reducible by the invariants of $g$ if and only if
there is a vector field $f$ and a scalar function $\rho$ such that $\left[g,\,f\right]=0$ and $f^*=f+\rho g$.
\end{corollary}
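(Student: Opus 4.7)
The proof naturally splits into two implications. For the easy direction ($\Leftarrow$), observe that a single vector field is trivially in involution with itself since $[g,g]=0$, so the hypothesis $f^*=f+\rho g$ with $[g,f]=0$ places us verbatim in the setting of Proposition \ref{easycon} with $s=1$, $g_1=g$, $\mu_1=\rho$. That Proposition immediately yields the reducibility of $f^*$ by the invariants of $g$.

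For the harder direction ($\Rightarrow$), the plan is to bypass the general framework of Proposition \ref{easyconcon} and exploit the fact that $s=1$ permits a complete normalization of $g$. Working locally near any point where $g\neq 0$ (which is precisely the maximal-rank condition in the $s=1$ case), I would invoke the straightening theorem to pick analytic coordinates in which $g=\partial/\partial x_n$. In these coordinates the algebra of invariants of $g$ consists exactly of the analytic functions of $x_1,\ldots,x_{n-1}$, so $\Psi:=(x_1,\ldots,x_{n-1})^{\rm tr}$ is a reducing map of maximal rank $n-1$. Applying Lemma \ref{loctriv} to this $\Psi$ then forces the first $n-1$ components of $f^*$ to be independent of $x_n$.

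Given this normal form the decomposition writes itself: set $\rho:=f^*_n$ and $f:=f^*-\rho g$. Then $f$ has vanishing last component and its remaining components still depend only on $x_1,\ldots,x_{n-1}$, so a one-line computation gives $[g,f]=[\partial/\partial x_n,f]=0$. Transporting back to the original coordinates finishes the argument.

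There is no genuine obstacle here; the only subtlety is that the statement is intrinsically local, which matches the scope of Proposition \ref{easyconcon}. A slightly less direct route would be to deduce the corollary from Proposition \ref{easyconcon} in the $s=1$ case, but one would then have to compensate for the scalar factor $\sigma$ relating the module basis $\hat g$ to $g$ by solving an auxiliary first-order equation of the form $X_g(b)=-\sigma^{-1}X_f(\sigma)$ along the flow of $g$. Since the straightening theorem handles this compensation implicitly, the direct approach above is cleaner.
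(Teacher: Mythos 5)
Your proof is correct, and it takes a more direct route than the paper. The paper disposes of the corollary in one line by specializing Proposition \ref{easyconcon} to $s=1$: there $\hat g=\sigma g$ and $f=f^*-\mu\hat g$, and the claim is made with $\rho=\mu\sigma$. Taken literally, that chain only delivers $\left[\hat g,\,f\right]=0$, i.e. $\left[\sigma g,\,f\right]=0$, which differs from $\left[g,\,f\right]=0$ by the term $X_f(\sigma)g$ unless one either normalizes $\sigma=1$ or corrects $f$ by a multiple of $g$ -- exactly the auxiliary equation $X_g(b)=\sigma^{-1}X_f(\sigma)$ you mention in your closing paragraph (cf.\ Remark \ref{easyconrem}(a), where the analogous equation is solved via straightening). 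Your argument avoids this issue entirely by straightening $g$ itself to $\partial/\partial x_n$, so that the module generator and the given vector field coincide and Lemma \ref{loctriv} hands you the decomposition $f^*=f+f^*_n\,g$ with $\left[g,\,f\right]=0$ on the nose; this is essentially an inlining of the proof of Proposition \ref{easyconcon} (which itself rests on Lemma \ref{loctriv}) specialized to one generator, but it is cleaner and arguably more rigorous on the point where $g$ and its rescaling $\hat g$ must be distinguished. The only cost is that you re-invoke the straightening theorem explicitly rather than quoting the proposition, which changes nothing about the (inherently local, non-constructive) character of the statement.
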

\begin{proof} Using the notation of Proposition \ref{easycon} we have $\hat g=\sigma g$ and $f=f^*-\mu \hat g$; the assertion follows with $\rho
=\mu\sigma$.
\end{proof}
\begin{remark}\label{easyconrem}{\em (a) In the special case $s=1$, $g=g_1$, a direct proof  of Proposition \ref{easyconcon} runs as follows: Assume $\left[g,\,f^*\right]=\beta g$ for some function $\beta$, and make the ansatz $f=f^*-\mu \,g$. Then $\left[g,\,f\right]=0$ if and only if
$X_g(\mu)+\beta =0$, and the latter has a solution near any non-stationary point of $g$ (e.g. by the straightening theorem). Thus the function $\mu$ can be determined explicitly whenever a straightening map for $g$ is explicitly known.
\\
(b) As noted above, passing to a different module basis $\hat g_i=\sum \sigma_{ij}g_j$ will not change the reducibility conditions and properties but may affect other distinguished properties, like commutation of vector fields. For the case of one module generator $g=g_1$ this was discussed in Pucci and Saccomandi \cite{PuSa}.\quad$\diamond$
}
\end{remark}
\subsection{Orbital reducibility}
For first-order ordinary differential equations one is not only interested in symmetries but more generally in orbital symmetries. Similarly, it is sensible to generalize from reduction to orbital reduction, as we will do next. We first recall a characterization of orbital symmetry from \cite{WMul}, Lemma 2.3.
\begin{proposition}\label{altorb}
Let the analytic differential equation $\dot x=f(x)$ be given on $U$. Assume that there is a rank $m$ analytic  involution system $g_1, \ldots, g_s$ in the open subset $\widetilde U$, with $n-m$ independent common invariants $\psi_1,\ldots,\psi_{n-m}$ such that every common invariant can be expressed as an analytic function of the $\psi_j$. Assume that $f$ is not an element of the module generated by the $g_j$. Then \eqref{ode} is orbitally symmetric with respect to $g_1,\ldots,g_s$ if and only if there is an analytic function $\mu$ without zeros on $\widetilde U$ such that 
\[
\left[\mu f,\,g_j\right]=0,\quad 1\leq j\leq s.
\]
\end{proposition}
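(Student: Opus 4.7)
The plan is to prove each implication separately. First I would restate the hypothesis of orbital symmetry with respect to $g_1,\ldots,g_s$, via Proposition \ref{symcrit}(b), as the existence of analytic scalars $\alpha_j$ on $\widetilde U$ with $[g_j,f]=\alpha_j f$ for $j=1,\ldots,s$. The implication ($\Leftarrow$) is then an immediate bracket calculation: using the Leibniz identity $[g_j,\mu f]=\mu[g_j,f]+X_{g_j}(\mu)\,f$, the hypothesis $[\mu f,g_j]=0$ together with the fact that $\mu$ has no zeros gives $[g_j,f]=-(X_{g_j}(\mu)/\mu)\,f$, so each $g_j$ generates orbital symmetries of $\dot x=f(x)$.

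For ($\Rightarrow$), assuming $[g_j,f]=\alpha_j f$, the goal is to produce an analytic, nowhere zero $\mu$ with $[\mu f,g_j]=0$ for all $j$. Expanding again, this is equivalent to the first-order linear system $X_{g_j}(\mu)+\mu\,\alpha_j=0$. Setting $\mu=e^{\nu}$ reduces it to $X_{g_j}(\nu)=-\alpha_j$, to which I would apply Frobenius's theorem (this is the natural tool since the $g_j$ form an involution system of rank $m$ with structure relations $[g_j,g_k]=\sum_i\mu_{jki}g_i$). The Frobenius integrability condition, obtained from $X_{g_j}X_{g_k}(\nu)-X_{g_k}X_{g_j}(\nu)=X_{[g_j,g_k]}(\nu)$, reads
\[
X_{g_j}(\alpha_k)-X_{g_k}(\alpha_j)=\sum_i \mu_{jki}\,\alpha_i.
\]

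The central step is to establish this compatibility. My plan is to apply the Jacobi identity to the triple $(g_j,g_k,f)$, expanding each of the three double brackets using the structure relations $[g_j,g_k]=\sum_i\mu_{jki}g_i$, $[g_i,f]=\alpha_i f$, and the Leibniz rule $[hV,W]=h[V,W]-X_W(h)\,V$. The two contributions $\pm\alpha_j\alpha_k f$ cancel, and the identity collapses to
\[
\Bigl(\sum_i \mu_{jki}\,\alpha_i-X_{g_j}(\alpha_k)+X_{g_k}(\alpha_j)\Bigr)f=\sum_i X_f(\mu_{jki})\,g_i.
\]
The right-hand side lies in the $A(\widetilde U)$-module generated by $g_1,\ldots,g_s$. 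The hypothesis that $f$ is \emph{not} in this module then forces the scalar coefficient on the left to vanish, which is exactly the compatibility identity needed. Frobenius then delivers an analytic $\nu$ in a neighborhood of any maximal-rank point of the involution system, and $\mu:=e^{\nu}$ is the desired integrating factor, analytic and nowhere zero.

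I expect the principal obstacle to be the passage from the Jacobi-derived \emph{vector} identity to a \emph{scalar} identity; this is the indispensable use of the non-module hypothesis on $f$ and must be argued carefully (for instance, if the scalar coefficient were nonzero on some open set then $f$ would, on that set, be an analytic combination of the $g_i$, contradicting the standing assumption). A secondary, milder point is that Frobenius yields only local existence, so $\widetilde U$ may have to be restricted to a neighborhood of a maximal-rank point; this is consistent with the setting of the proposition.
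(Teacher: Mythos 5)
Your proof is correct, but it takes a genuinely different route from the paper's. The paper states Proposition \ref{altorb} as a recalled result (from \cite{WMul}, Lemma 2.3), and the argument it actually writes out for the more general Theorem \ref{orborder} specializes to the present case as follows: since $f$ is not in the module generated by the $g_j$, there is a common invariant $\psi$ with $X_f(\psi)\not=0$; the relations $[g_i,f]=\alpha_i f$ then yield $X_{g_i}X_f(\psi)=\alpha_i X_f(\psi)$, and a one-line bracket computation shows that $\mu:=1/X_f(\psi)$ satisfies $[\mu f,g_i]=0$ on the open set where $X_f(\psi)\not=0$. So the paper's integrating factor is explicit, built from a known invariant, and needs no appeal to Frobenius. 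You instead reduce to the overdetermined system $X_{g_j}(\nu)=-\alpha_j$, derive its compatibility condition from the Jacobi identity for $(g_j,g_k,f)$, and use the non-module hypothesis to kill the scalar coefficient of $f$; the computation is right, and the two caveats you flag (passing from the vector identity to the scalar one, and the merely local existence of $\nu$) are exactly the right ones --- note that the paper itself identifies ``$f$ lies in the module'' with ``$X_f$ annihilates all common invariants'' near maximal-rank points, which legitimizes your open-set/analyticity argument. One point you leave implicit: if $s>m$ the $g_j$ are functionally dependent, and the consistency of the equation for a redundant $g_{s'}=\sum_i c_i g_i$ requires $\alpha_{s'}=\sum_i c_i\alpha_i$, which again follows from the non-module hypothesis by the same kind of reasoning. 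The trade-off between the two proofs: the paper's is constructive once an invariant is known (which matters for Remark \ref{orbredred}), while yours works entirely at the level of the structure functions $\alpha_i$, $\mu_{jki}$ and makes transparent why the hypothesis that $f$ not lie in the module is indispensable.
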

Correspondingly, we define: 
\begin{Definition}\label{orbredudef}
Equation \eqref{ode} is {\em orbitally reducible} by the map $\Psi$ if some equation $\dot x = \mu(x)\cdot f(x)$ ($\mu$ analytic without zeros on an open subset $\widetilde U\subseteq U$) is reducible by $\Psi$; in other words, instead of \eqref{reducond} the identity 
\begin{equation}\label{orbreducond}
\mu(x)D\Psi(x)f(x)=h(\Psi(x))
\end{equation}
holds on $\widetilde U$. 
\end{Definition}
\begin{remark}{\em (a) This is clearly a necessary and sufficient condition for solution orbits of \eqref{ode} to be mapped by $\Psi$ to solution orbits of $\dot x=h(x)$. Equivalently, for every local first integral $\rho$ of $\dot x=h(x)$ the pullback $\rho\circ \Psi$ is a first integral of $\dot x=f(x)$.\\
(b) Via autonomization this definition extends to non-autonomous equations, and due to part (a) and Remark \ref{autonon} this is the natural notion of a reducing map for non-autonomous equations; i.e., some map which sends solutions of a system to solutions of a system "in smaller dimension".\quad$\diamond$}
\end{remark}
\begin{remark}\label{trivorbred}{\em While explicitly finding a reducing map to a one-dimensional equation is a hard problem, finding an orbital reducing map to any one-dimensional equation with nonzero right-hand side is locally trivial: For any $h$ such that $h\circ\Psi$ is not identically zero, one may choose the factor $\mu$ in a suitable way. But this observation is of little interest since it provides no information about solutions to \eqref{ode}.\quad$\diamond$}
\end{remark}
Next we want to give a characterization of orbital reducibility in terms of Lie bracket properties. The critical argument in one direction of the proof is similar to \cite{WMul}, Lemma 2.3.
\begin{theorem}\label{orborder}
Let the analytic differential equation $\dot x=f(x)$ be given on $U$, and let a rank $m$ analytic  involution system $g_1, \ldots, g_s$ be given in the open subset $\widetilde U$. Assume that there are $n-m$ independent common invariants $\psi_1,\ldots,\psi_{n-m}$ of the $g_i$ such that every common invariant can be expressed as an analytic function of the $\psi_j$. Then \eqref{ode} is orbitally reducible by the $\psi_j$ if and only if there are analytic functions $\alpha_i$ and $\lambda_{ij}$ on $\widetilde U$ such that
\begin{equation}\label{orbred}
\left[g_i,f\right] =\alpha_i f+\sum_j \lambda_{ij}g_j,\quad 1\leq i\leq s
\end{equation}
\end{theorem}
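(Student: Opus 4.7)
The plan is to translate orbital reducibility into a condition on the functions $X_f(\psi_k)$, using the fact that the common invariant algebra of the $g_i$ is generated as an analytic algebra by $\psi_1,\ldots,\psi_{n-m}$, and to exploit the Frobenius-type description of the kernel of $D\Psi$ from Corollary \ref{finitered}(b).

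For the \emph{if} direction, assume the bracket identities \eqref{orbred} hold. Since $X_{g_j}(\psi_k)=0$, the derivation identity $X_{[g_i,f]}=X_{g_i}X_f-X_fX_{g_i}$ together with \eqref{orbred} gives
\[
X_{g_i}\bigl(X_f(\psi_k)\bigr) \;=\; X_{[g_i,f]}(\psi_k) \;=\; \alpha_i\,X_f(\psi_k).
\]
If all $X_f(\psi_k)$ vanish identically then $f$ lies in the module generated by the $g_j$, so every $\psi_k$ is a first integral of $f$ and orbital reducibility is trivial. Otherwise pick an index $\ell$ with $X_f(\psi_\ell)\not\equiv 0$; the quotients $F_k := X_f(\psi_k)/X_f(\psi_\ell)$ then satisfy $X_{g_i}(F_k)=0$ by the identity above, so they are common invariants of the $g_i$ and hence, by hypothesis, analytic functions $F_k = F_k^*\circ\Psi$. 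Setting $\mu:=1/X_f(\psi_\ell)$ on the open dense subset where $X_f(\psi_\ell)\neq 0$ yields $X_{\mu f}(\psi_k) = F_k^*(\Psi(x))$ for every $k$, so $\Psi$ is a reducing map for $\mu f$ in the sense of \eqref{orbreducond}.

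For the \emph{only if} direction, suppose $\mu$ is analytic without zeros on $\widetilde U$ and $X_{\mu f}(\psi_j)=h_j(\Psi(x))$. Applying $X_{g_i}$ and using $X_{g_i}(\psi_k)=0$ gives
\[
X_{[g_i,\mu f]}(\psi_j) \;=\; X_{g_i}\bigl(h_j(\Psi)\bigr) \;=\; \sum_k \tfrac{\partial h_j}{\partial y_k}(\Psi)\cdot X_{g_i}(\psi_k) \;=\; 0,
\]
so $[g_i,\mu f]$ annihilates every $\psi_j$. By the argument of Corollary \ref{finitered}(b), near any point where the involution system has maximal rank $m$ the module of vector fields in the kernel of $D\Psi$ coincides with the $A(\widetilde U)$-module spanned by $g_1,\ldots,g_s$; hence $[g_i,\mu f]=\sum_k\tau_{ik}\,g_k$ for analytic $\tau_{ik}$. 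Expanding $[g_i,\mu f] = \mu[g_i,f] + X_{g_i}(\mu)\,f$ and dividing by $\mu$ produces \eqref{orbred} with $\alpha_i = -X_{g_i}(\mu)/\mu$ and $\lambda_{ij}=\tau_{ij}/\mu$.

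The main obstacle will be the step in the \emph{only if} direction that identifies the kernel of $D\Psi$ (as an $A(\widetilde U)$-module) with the module generated by the $g_i$; this rests on the rank-$m$ hypothesis, the assumed generation of the common invariants by $\psi_1,\ldots,\psi_{n-m}$, and the linear-algebra argument of Corollary \ref{finitered}(b). A secondary subtlety is the degenerate case in the \emph{if} direction where $f$ itself sits in the $g_i$-module; this is handled by noting that orbital reducibility is then trivial, so one may pass to the open dense subset where $X_f(\psi_\ell)\neq 0$ for some $\ell$ to construct $\mu$. With these two points settled, everything else is bookkeeping with Lie brackets and the derivation identity.
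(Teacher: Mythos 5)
Your proposal is correct and follows essentially the same route as the paper: the \emph{only if} direction is the identical bracket expansion of $[g_i,\mu f]$ modulo the kernel module of $D\Psi$, and the \emph{if} direction rests on the same key identity $X_{g_i}\bigl(X_f(\psi_k)\bigr)=\alpha_i X_f(\psi_k)$, which you exploit via the invariant quotients $F_k$ while the paper equivalently normalizes $f$ by $1/X_f(\psi)$ and shows the rescaled field normalizes the module. Your explicit flagging of the degenerate case and of the kernel-identification step matches (and slightly sharpens) the paper's treatment.
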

\begin{proof}
 In the setting of Corollary \ref{finitered}  we obtain 
\[
\left[\mu \cdot f,\,g_i\right]=\sum_j \gamma_{ij}g_j,\quad 1\leq i\leq s, 
\]
and therefore
\[
\mu \cdot\left[ f,\,g_i\right]=X_{g_i}(\mu)\cdot f+\sum_j \gamma_{ij}g_j,\quad 1\leq i\leq s.
\]
Division by $\mu$ yields \eqref{orbred}.

For the reverse direction assume that 
\eqref{orbred}
holds on $U$.
If $X_f(\phi)=0$ for every common invariant $\phi$ of the $g_j$ then $f$ lies in the module generated by the $g_j$, due to the rank and independence conditions, and the bracket relation holds trivially. 
Otherwise, let $\psi$ be analytic such that
\[
X_{g_1}(\psi)=\cdots=X_{g_s}(\psi)=0,\mbox{   but   }X_f(\psi)\not=0.
\]
Setting 
\[
f^*:=\frac 1{X_f(\psi)}f
\]
one obtains
\[
\left[g_i,f^*\right] =\sum_j \lambda^*_{ij}g_j,\quad 1\leq i\leq s
\]
with analytic $ \lambda^*_{ij}$ on $U^*:=\{x\in U:\,X_f(\psi)(x)\not=0\}$.
Indeed, for any $i$ the commutation relation for $g_i$ and $f$ implies
\[
X_{g_i}X_f(\psi)-X_fX_{g_i}(\psi)=\alpha_iX_f(\psi)+\sum_j\lambda_{ij}X_{g_j}(\psi),
\]
and thus
\[
X_{g_i}X_f(\psi)=\alpha_iX_f(\psi).
\]
In view of
\[
\left[g_i,\frac 1{X_f(\psi)}f\right]=\frac 1{X_f(\psi)}\left[g_i,f\right]-\frac {X_{g_i}X_f(\psi)}{X_f(\psi)^2}f
\]
the assertion follows with $\lambda_{ij}^*:=\lambda_{ij}/{X_f(\psi)}$.
\end{proof}

\begin{remark} \label{onegee}
 {\em If the involution system consists just of $g=g_1$ then one obtains, as a particular case, the condition $\left[g,\,f\right]=\alpha f + \lambda g$. This leads to the $\lambda$-symmetries of Muriel and Romero; see \cite{MuRo1,MuRo1a,MuRo2, MuRo3}.  Theorem 2.5 in \cite{MuRo1a} corresponds directly to the above Theorem in case $s=1$; see also Section 2 of \cite{MuRo3}. The theoretical framework was clarified by Morando \cite{Mor} (Subsection 4.2 in particular). These observations presume the identification of higher-order ODEs with an equivalent first-order system, and restriction of $g$ to infinitesimal point transformations. (See more on this in Section \ref{Higher} below.) For this reason one could call $(g_1,\ldots,\,g_s)$ a system of {\em joint $\lambda$-symmetries} for the equation \eqref{ode}. In \cite{CGWsprol} the name {\em $\sigma$- symmetries} was chosen, to emphasize the focus on prolongations. \quad$\diamond$}
\end{remark}
\begin{remark}\label{orbredred}
 {\em In the setting of Theorem \ref{orborder} the actual computation of an orbitally reduced system works as follows. There exists some $\mu$ such that the identities
\[
\left[\mu \cdot f,\,g_i\right]=\sum_j \gamma_{ij}g_j,\quad 1\leq i\leq s, 
\]
hold, and by Corollary \ref{finitered} there exist analytic functions $h_i$ such that
\[
\mu X_f(\psi_i)=X_{\mu f}(\psi_i)=h_i(\psi_1,\ldots,\psi_{n-m})
\]
for $1\leq i\leq n-m$. While $\mu$ may not be explicitly known, one may turn to $h_i^*:=h_i/h_1$ for $1\leq i\leq n-m$, which can be expressed as functions of the $\psi_i$ alone. Therefore the $\psi_i$ define an orbit-preserving map from \eqref{ode} to $\dot y=h^*(y)$.\quad$\diamond$
}
\end{remark}
We finish this subsection with the counterparts to Proposition \ref{easycon} ff. The proofs are immediate, in view of Theorem \ref{orborder}, Proposition \ref{altorb}, Proposition \ref{easycon}, Proposition \ref{easyconcon} and Corollary \ref{easyconcor}.
\begin{proposition}\label{easyconorb}  Let the analytic differential equation $\dot x=f(x)$ be given on $U$, and assume that there is an analytic  involution system $g_1, \ldots, g_s$ and analytic functions $\alpha_i$ in the open subset $\widetilde U$ such that $\left[f,\,g_i\right]=\alpha_i f$ for $1\leq i \leq s$. Then, given arbitrary analytic functions 
$\mu_1,\ldots,\mu_s$ on $\widetilde U$, the vector field
\[
f^*:=f+\sum \mu_i g_i
\]
is orbitally reducible by the common invariants of $g_1,\ldots,g_s$.
\end{proposition}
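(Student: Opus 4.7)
The plan is to reduce the claim to the bracket criterion of Theorem~\ref{orborder}. Setting $f^* := f + \sum_i \mu_i g_i$, I would verify that for each index $k \in \{1,\ldots,s\}$ the commutator $[g_k, f^*]$ takes the form
\[
[g_k, f^*] = \alpha_k^*\, f^* + \sum_{j=1}^s \lambda_{kj}\, g_j
\]
with analytic coefficients $\alpha_k^*$ and $\lambda_{kj}$ on $\widetilde U$. Once that identity is in hand, Theorem~\ref{orborder} immediately delivers the orbital reducibility of $f^*$ by the common invariants of the $g_i$.

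To establish the identity, I would expand by bilinearity and the Leibniz-type rule for $[g_k, \mu_i g_i]$ to get
\[
[g_k, f^*] = [g_k, f] + \sum_{i=1}^s \bigl(X_{g_k}(\mu_i)\, g_i + \mu_i\,[g_k, g_i]\bigr).
\]
The hypothesis $[f, g_i] = \alpha_i f$ gives $[g_k, f] = -\alpha_k f$, and the involution property ensures that each $[g_k, g_i]$ is an $A(\widetilde U)$-linear combination of the $g_j$. Hence every summand on the right other than the initial $-\alpha_k f$ already sits in the module $A(\widetilde U)\{g_1,\ldots,g_s\}$.

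The final step is to eliminate the remaining copy of $f$ via $f = f^* - \sum_i \mu_i g_i$; this substitution converts $-\alpha_k f$ into $-\alpha_k f^* + \alpha_k \sum_i \mu_i g_i$, absorbing the last occurrence of $f$ into one $f^*$-term together with additional $g_j$-terms. This produces the desired bracket relation with $\alpha_k^* = -\alpha_k$ and $\lambda_{kj}$ built from $\alpha_k$, $\mu_i$, $X_{g_k}(\mu_i)$, and the structure functions of the involution system.

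I do not anticipate any real obstacle here: the argument is a purely formal bracket computation, and all the nontrivial content has been packaged into Theorem~\ref{orborder}. The only point deserving a brief note is that "reducible by the common invariants" is to be read in the same sense as in Proposition~\ref{easycon} and Theorem~\ref{orborder}, namely via a finite generating family $\psi_1,\ldots,\psi_{n-m}$ of the invariant algebra meeting the rank conditions; this is implicit in the statement and requires no separate argument.
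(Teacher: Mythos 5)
Your argument is correct and is exactly the route the paper intends: the paper declares the proof ``immediate, in view of Theorem~\ref{orborder} \dots\ Proposition~\ref{easycon}\dots'', and your computation is precisely the bracket expansion from the proof of Proposition~\ref{easycon} combined with the substitution $f=f^*-\sum_i\mu_i g_i$ to land in the form required by Theorem~\ref{orborder} (with $\alpha_k^*=-\alpha_k$ correctly accounting for the sign flip between $[f,g_k]$ and $[g_k,f]$). Your closing remark about the implicit rank and generation hypotheses on the common invariants is also consistent with how the paper uses Theorem~\ref{orborder}.
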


\begin{proposition}\label{easyconorbcon}
Let $g_1,\ldots, g_s$ form an analytic involution system on an open set $U$ such that the vector field $f^*$ is orbitally reducible by the common invariants of the $g_i$. Then locally near any point of $U$ where $(g_1,\ldots, g_s)$ has maximal rank $s$, the module spanned by the $g_i$ has a basis $\hat g_i=\sum_j \sigma_{ij}g_j$, with analytic functions $\sigma_{ij}$, $1\leq i\leq s$, and there exist analytic functions $\mu_i$ such that $f:=f^*-\sum\mu_j\hat g_j$ satisfies $\left[f,\,\hat g_i\right]=\alpha_i f$ with suitable analytic $\alpha_i$, for $1\leq i\leq s$.
\end{proposition}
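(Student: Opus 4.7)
The plan is to adapt the proof of Proposition \ref{easyconcon}, carrying the non-vanishing scalar factor $\mu$ supplied by orbital reducibility. Let $y\in U$ be a point where $(g_1,\ldots,g_s)$ has maximal rank $s$; then by the Frobenius theorem there are $n-s$ independent common invariants $\psi_1,\ldots,\psi_{n-s}$ of the $g_i$ near $y$. I would choose local coordinates $x_1,\ldots,x_n$ around $y$ with $\psi_j=x_j$ for $1\le j\le n-s$; in these coordinates every element of the module spanned by the $g_i$ has zero first $n-s$ components (since $\psi_1,\ldots,\psi_{n-s}$ are invariants). The rank hypothesis makes the $s\times s$ matrix $\tau(x)$ formed from the last $s$ components of $g_1,\ldots,g_s$ analytic and invertible at $y$; on a smaller neighborhood, its inverse supplies analytic functions $\sigma_{ij}$ with $\hat g_i:=\sum_j\sigma_{ij}g_j=\partial/\partial x_{n-s+i}$. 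This delivers the asserted basis.

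By orbital reducibility there is an analytic $\mu$ without zeros such that $\mu f^*$ is reducible by the $\psi_j$; equation \eqref{reducond} then forces the first $n-s$ entries of $\mu f^*$ to depend only on $x_1,\ldots,x_{n-s}$. I would accordingly decompose
\[
\mu f^*=a+b,
\]
with $a=(a_1(x_1,\ldots,x_{n-s}),\ldots,a_{n-s}(x_1,\ldots,x_{n-s}),0,\ldots,0)^{\rm tr}$ and $b=\sum_i b_{n-s+i}\,\hat g_i$ (valid because the $\hat g_i$ are the coordinate fields $\partial/\partial x_{n-s+i}$), then set $\mu_i:=\mu^{-1}b_{n-s+i}$, which is analytic because $\mu$ is nonvanishing, and
\[
f:=f^*-\sum_i\mu_i\,\hat g_i=\mu^{-1}a.
\]

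The bracket relation is then a short computation. Since $a$ is independent of $x_{n-s+1},\ldots,x_n$, one has $[\hat g_i,a]=0$, whence
\[
[f,\hat g_i]=-[\hat g_i,\mu^{-1}a]=-\hat g_i(\mu^{-1})\,a=\mu^{-1}\,\frac{\partial\mu}{\partial x_{n-s+i}}\,f,
\]
so the choice $\alpha_i:=\mu^{-1}\partial\mu/\partial x_{n-s+i}$ (analytic) does the job. The principal obstacle is bookkeeping: verifying that the coordinate straightening, the basis transformation $\sigma_{ij}$, and the scalars $\mu_i,\alpha_i$ remain analytic near $y$. All of this follows from nonvanishing of both $\mu$ and $\det\tau(y)$, so nothing genuinely new occurs beyond Proposition \ref{easyconcon} apart from the appearance of the nonzero $\alpha_i$, traceable directly to the factor $\mu$.
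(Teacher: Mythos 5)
Your argument is correct and is essentially the paper's intended proof: the paper declares this "immediate" from Proposition \ref{easyconcon} (whose proof is exactly the local straightening of Lemma \ref{loctriv}), applied to $\mu f^*$ with the nonvanishing factor $\mu$ from Definition \ref{orbredudef}, and your explicit computation $[f,\hat g_i]=\bigl(\mu^{-1}\partial\mu/\partial x_{n-s+i}\bigr)f$ is precisely where the extra terms $\alpha_i f$ come from. Nothing to add.
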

\begin{corollary}\label{easyconorbcor} In the special case $s=1$, $g=g_1$
a vector field $f^*$ is orbitally reducible by the invariants of $g$ if and only if
there is a vector field $f$ and scalar functions $\alpha$, $\rho$ such that $\left[g,\,f\right]=\alpha f$ and $f^*=f+\rho g$.
\end{corollary}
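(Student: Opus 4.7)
The plan is to verify the two directions separately, both reducing to earlier results in the section.

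For the sufficiency (``if'') direction, I would apply Proposition \ref{easyconorb} with $s=1$, $g_1 = g$, and $\mu_1 = \rho$: the involution hypothesis is vacuous for a single vector field, and $[g,f]=\alpha f$ rephrased as $[f,g]=-\alpha f$ is exactly the condition required there. The conclusion of that proposition then says that $f^* = f + \rho g$ is orbitally reducible by the invariants of $g$.

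For the necessity (``only if'') direction, I would first invoke Theorem \ref{orborder} (with $s=1$) to obtain analytic functions $\alpha$ and $\lambda$ on $\widetilde U$ such that
\[
[g, f^*] \,=\, \alpha\, f^* + \lambda\, g.
\]
Following the strategy from Remark \ref{easyconrem}(a), which settles the non-orbital analogue, I would then make the ansatz $f := f^* - \mu g$ for an unknown scalar function $\mu$ and compute, using $[g,\mu g] = X_g(\mu)\,g$ and $f^* = f + \mu g$,
\[
[g, f] \,=\, [g, f^*] - X_g(\mu)\, g \,=\, \alpha f + \bigl(\alpha\mu + \lambda - X_g(\mu)\bigr) g.
\]
Hence $[g,f] = \alpha f$ is equivalent to the inhomogeneous scalar linear first-order PDE
\[
X_g(\mu) \,=\, \alpha\,\mu + \lambda,
\]
after which setting $\rho := \mu$ delivers the desired decomposition $f^* = f + \rho g$.

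The only nontrivial ingredient is local solvability of this PDE. Near any point at which $g$ is nonzero, the straightening theorem transforms $g$ into a coordinate vector field, reducing the equation to an inhomogeneous linear ODE in one variable, which is integrable in closed form. Thus the ``only if'' direction is a local assertion valid on a neighborhood of every non-stationary point of $g$, consistent with the local nature of Propositions \ref{easyconcon} and \ref{easyconorbcon}; this is the main (and only) substantive step, and I would view it as a local rather than computational obstacle within an otherwise purely algebraic manipulation of Lie brackets.
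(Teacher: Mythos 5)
Your proof is correct. The ``if'' direction is exactly the paper's: apply Proposition \ref{easyconorb} with $s=1$ (the sign flip $[f,g]=-\alpha f$ is harmless). For the ``only if'' direction, however, you take a mildly different route from the one the paper points to. The paper derives the corollary from Proposition \ref{easyconorbcon}, whose proof rests on the local normal form of Lemma \ref{loctriv} (straightened invariants $x_1,\ldots,x_m$ and module basis $e_{m+1},\ldots,e_n$, followed by splitting off the module part of $f^*$). You instead start from the bracket relation $[g,f^*]=\alpha f^*+\lambda g$ supplied by Theorem \ref{orborder}, make the ansatz $f=f^*-\mu g$, and reduce the problem to the scalar linear equation $X_g(\mu)=\alpha\mu+\lambda$, solvable near any non-stationary point of $g$ by straightening $g$ alone. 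This is precisely the orbital analogue of the direct argument the paper sketches in Remark \ref{easyconrem}(a) for the non-orbital case ($\alpha=0$), and your bracket computation $[g,f]=\alpha f+(\alpha\mu+\lambda-X_g(\mu))g$ is correct. What your version buys is constructiveness: $\mu$ (hence $\rho$) is obtained explicitly whenever a straightening map for $g$ is known, and only $g$ needs to be straightened rather than a full set of its invariants being normalized; what the paper's route buys is uniformity with the multi-field case $s>1$, where a single scalar transport equation no longer suffices. Both arguments are local near points where $g\neq 0$, consistent with the (implicitly local) statement of the corollary.
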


\subsection{Reduction by group invariants}
In this subsection we consider some aspects of the "inverse problem" to determine all vector fields that are (orbitally) reducible by some given map.
Following a guiding principle established in Olver and Rosenau \cite{OlRo2} (albeit in the context of partial differential equations), we consider reduction of \eqref{ode} by invariants of some group, with the system itself not necessarily symmetric.
Given a (local Lie) group of transformations, it may be difficult to determine all differential equations which are symmetric with respect to this group. By extension it it may be difficult to determine all differential equations which are reducible by its invariants.  But at least Propositions \ref{easycon} and \ref{easyconorb} provide a simple construction of reducible systems from symmetric ones, and for local one-parameter groups this construction yields all reducible systems. We extend this result and show that for Lie algebras of compact and connected (linear) Lie groups, there is a method to construct all reducible vector fields on an open and dense subset. The underlying reason is the existence of a convenient representation (on an open-dense subset) for any vector field. This may be considered a consequence of the slice theorem (see e.g. Br\"ocker and tom Dieck \cite{BtD}), but we use a simple shortcut. The following results are an extension of \cite{HaWa}, Example 2.5, where the Lemma and the first part of the Proposition were proven. The remaining statements are clear from the previous subsections.
\begin{lemma}\label{compredprep} Let $G\subseteq GL(n,\,\mathbb R)$ be a connected compact linear group, with invariant scalar product $\left<\cdot,\,\cdot\right>$, and denote by $\mathcal G$ its Lie algebra. \\
Denote by $s$ the maximal orbit dimension of $G$ and let $B_1,\ldots,B_s\in{\mathcal G}$ be such that $B_1z,\ldots,B_sz$ are linearly independent in $\mathbb R^n$ for some $z$, hence for all $z$ in an open-dense subset. (In other words, $s$ is the rank of the involution system generated by ${\mathcal G}$.) Then there exist algebraically independent polynomial invariants $\sigma_1,\ldots,\sigma_{n-s}$; let their gradients $q_j$ be defined by
\[
D\sigma_j(x)y=\left<q_j(x),\,y\right>.
\]
The $q_j$ are $G$-symmetric (thus every transformation in $G$ is a symmetry for $\dot x = q_j(x)$), moreover
\[
\theta(x):=\det\left(B_1x,\ldots,\,B_sx,\,q_1(x),\ldots,\,q_{n-s}(x)\right)
\]
is a nonzero polynomial, and every vector field $f$ on $U$ admits a representation
\begin{equation}\label{specrep}
f(x)=\sum\alpha_i(x)\,B_ix+\sum \beta_j(x)q_j(x)
\end{equation}
which holds on $\widetilde U:=\left\{x\in U;\,\theta(x)\not=0\right\}$.

\end{lemma}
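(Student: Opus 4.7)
The plan is to establish the claims in four steps: (i) existence of $n-s$ algebraically independent polynomial invariants $\sigma_j$, (ii) $G$-equivariance of the gradients $q_j$, (iii) the fact that $\theta$ is a nonzero polynomial, and (iv) the representation formula for an arbitrary analytic $f$. Steps (i) and (ii) rely on the standard machinery for compact (connected) linear groups, while (iii) is the geometric heart of the argument and (iv) is then essentially Cramer's rule.

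For (i), I invoke the Hilbert--Weyl theorem: because $G$ is compact, the invariant ring $\mathbb R[x]^G$ is finitely generated, and its transcendence degree over $\mathbb R$ equals $n$ minus the maximal orbit dimension, i.e. $n-s$. Hence one can select polynomial invariants $\sigma_1,\ldots,\sigma_{n-s}$ that are algebraically independent. For (ii), differentiating the invariance identity $\sigma_j(gx)=\sigma_j(x)$ in $x$ gives $D\sigma_j(gx)\,g = D\sigma_j(x)$, so $\langle q_j(gx),\,gy\rangle = \langle q_j(x),\,y\rangle$ for all $y$; invariance of $\langle\cdot,\cdot\rangle$ (so that $g$ is an isometry) yields $q_j(gx) = g\,q_j(x)$. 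This is precisely the equivariance condition characterizing the $G$-symmetry of the vector field $\dot x = q_j(x)$.

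For (iii), the key observation is that $q_j(x)$ is orthogonal to every orbit tangent vector $Bx$ with $B\in\mathcal G$: differentiating $\sigma_j(\exp(tB)x)=\sigma_j(x)$ at $t=0$ gives $\langle q_j(x),\,Bx\rangle =0$. The set $W_1 := \{x : B_1x,\ldots,B_sx \text{ are linearly independent}\}$ is, by hypothesis, the complement of a proper algebraic variety, hence Zariski-open and dense. Algebraic independence of $\sigma_1,\ldots,\sigma_{n-s}$ forces the Jacobian $(q_1(x),\ldots,q_{n-s}(x))$ to have rank $n-s$ on a second Zariski-open dense set $W_2$. On $W_1\cap W_2$ the tangential span and the normal span have dimensions $s$ and $n-s$ respectively and, being orthogonal, form a direct sum equal to $\mathbb R^n$. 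Hence $\theta$, a polynomial in $x$, is nonzero on $W_1\cap W_2$, so $\theta\not\equiv 0$. Finally, for (iv), at every $x\in\widetilde U = \{x: \theta(x)\neq 0\}$ the columns of the matrix defining $\theta$ form a basis of $\mathbb R^n$, so any analytic $f$ has a unique pointwise decomposition $f(x)=\sum\alpha_i(x) B_ix + \sum\beta_j(x)q_j(x)$; the coefficients are given by Cramer's rule as analytic functions with denominator $\theta(x)$, hence analytic on $\widetilde U$.

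The main obstacle is bridging (i) and (iii): one must know that algebraic independence of the $\sigma_j$ (a statement about the invariant ring) implies linear independence of their gradients on an open-dense set, and that this normal span is complementary to the tangential span $\langle B_1x,\ldots,B_sx\rangle$. The invariant scalar product does the latter work for free by providing orthogonality; without it one would need a separate argument (e.g. a slice theorem) to control the normal directions. The connectedness of $G$ enters to identify infinitesimal invariance (along $\mathcal G$) with full $G$-invariance, so that the $\sigma_j$ really are invariants of the whole group and the count $n-s$ is correct.
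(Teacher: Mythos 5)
Your proof is correct. The paper gives no proof of this lemma, deferring to \cite{HaWa}, Example 2.5; your argument --- orthogonality of the invariant gradients $q_j(x)$ to the orbit tangent vectors $Bx$ via the invariant scalar product, generic linear independence of the $q_j$ from algebraic independence of the $\sigma_j$ (Jacobian criterion), and Cramer's rule on $\left\{\theta\not=0\right\}$ --- is precisely the standard route that the cited reference takes, so there is nothing to add beyond the minor remark that connectedness of $G$ is really only needed to ensure $\exp(tB)\in G$ in the orthogonality step, not to upgrade infinitesimal to full invariance (the $\sigma_j$ are taken $G$-invariant from the outset).
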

\begin{proposition}\label{compred} Let the hypotheses and notation be as in Lemma \ref{compredprep}, and let the vector field $f$ be represented as in \eqref{specrep}.\\
{\em (a)} The vector field $f$ is reducible by the invariants of $G$ if and only if all $\beta_j$ are $G$-invariant. This is equivalent to 
\[
\hat f(x):=\sum \beta_j(x)q_j(x)
\]
being $G$-symmetric. Thus every reducible system on a subset of $\widetilde U$ is obtained from a symmetric one via Proposition \ref{easycon}.\\
{\em (b)} The vector field $f$ is orbitally reducible by the invariants of $G$ if and only if there is an analytic $\nu$ such that all $\beta_j=\nu\cdot\widetilde \beta_j$, with $\widetilde \beta_j$ $G$-invariant. This is equivalent to 
\[
\hat f(x):=\sum \beta_j(x)q_j(x)
\]
being orbitally $\cal G$-symmetric. Thus every orbitally reducible system on a subset of $\widetilde U$ is obtained from an orbitally symmetric one via Proposition \ref{easyconorb}.
\end{proposition}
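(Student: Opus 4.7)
The plan is to substitute the normal form \eqref{specrep} into $X_f(\sigma_k)(x) = \langle q_k(x), f(x)\rangle$ and read off when the resulting expression lies in the invariant algebra. Using that each $\sigma_k$ is $G$-invariant, differentiation of $\sigma_k(\exp(tB_i)x) = \sigma_k(x)$ at $t=0$ gives $\langle q_k(x), B_i x\rangle = 0$, so the contribution of the $\alpha_i$ drops out and
\[
X_f(\sigma_k) \;=\; \sum_j \beta_j(x)\,\langle q_k(x), q_j(x)\rangle.
\]
The Gram matrix $M(x) := \bigl(\langle q_k(x), q_j(x)\rangle\bigr)_{k,j}$ has $G$-invariant entries (because the $q_j$ are $G$-equivariant and the scalar product is $G$-invariant) and is invertible on $\widetilde U$ (because $\theta\neq 0$ forces the $q_j(x)$ to be linearly independent). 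Hence the $X_f(\sigma_k)$ are $G$-invariant on $\widetilde U$ if and only if all $\beta_j$ are.

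Next I would connect $G$-invariance of $X_f(\sigma_k)$ to reducibility by $\Psi = (\sigma_1,\ldots,\sigma_{n-s})^{\rm tr}$. On $\widetilde U$ the differentials $d\sigma_1,\ldots,d\sigma_{n-s}$ are linearly independent (again from $\theta\neq 0$), so the implicit function theorem locally completes them to a coordinate system by $s$ further coordinates tangent to the $G$-orbits; any $G$-invariant analytic function is constant along those directions, hence locally an analytic function of $\sigma_1,\ldots,\sigma_{n-s}$. This yields functions $h_k$ satisfying $X_f(\sigma_k) = h_k(\sigma_1,\ldots,\sigma_{n-s})$, which is exactly what Definition \ref{redudef} demands; the converse implication is immediate.

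For the equivalence with $\hat f := \sum \beta_j q_j$ being $G$-symmetric, I would use that a linear combination of $G$-equivariant vector fields with analytic coefficients is $G$-equivariant iff its coefficients are $G$-invariant, again using linear independence of the $q_j$ on $\widetilde U$. Infinitesimally this reads $[B_i x, \hat f] = \sum_j X_{B_i x}(\beta_j)\,q_j$ (since $[B_i x, q_j] = 0$ by $G$-symmetry of $q_j$), so $\hat f$ is $G$-symmetric iff every $\beta_j$ is $G$-invariant. The decomposition $f = \hat f + \sum \alpha_i\, B_i x$ then matches Proposition \ref{easycon}, with involution system $B_i x$ (since $[B_i x, B_j x] = [B_j, B_i]\,x$ stays in $\mathcal G x$ and thus, on $\widetilde U$, in the module spanned by $B_1 x, \ldots, B_s x$). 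This concludes (a).

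Part (b) reduces to (a) applied to $\mu f$: by Definition \ref{orbredudef}, orbital reducibility of $f$ is reducibility of some $\mu f$, whose $q_j$-coefficients are $\mu \beta_j$; part (a) then gives $G$-invariance of each $\mu \beta_j$, and $\nu := 1/\mu$, $\widetilde \beta_j := \mu \beta_j$ yield the factorization, with the converse direction running backwards. Orbital $\mathcal G$-symmetry of $\hat f$ is characterized by Proposition \ref{altorb} via the existence of $\mu$ without zeros such that $\mu \hat f$ is $G$-symmetric, which is the same condition on $\mu \beta_j$. Finally $f = \hat f + \sum \alpha_i B_i x$ fits Proposition \ref{easyconorb}. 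The step I expect to require the most care is moving from pointwise $G$-invariance of $X_f(\sigma_k)$ on the dense set $\widetilde U$ to expressibility as a single analytic function of the $\sigma_j$ as demanded by Definition \ref{redudef}: the local argument via coordinate completion is clean, and global coherence rests on compactness and connectedness of $G$, which guarantee that the polynomial invariants $\sigma_1,\ldots,\sigma_{n-s}$ generate the full analytic invariant algebra on $\widetilde U$.
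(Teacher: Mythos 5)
Your argument is correct and follows the route the paper intends: the paper delegates the key Gram-matrix computation $X_f(\sigma_k)=\sum_j\beta_j\langle q_k(x),q_j(x)\rangle$ (with $G$-invariant, invertible Gram matrix on $\widetilde U$) to \cite{HaWa}, Example 2.5, and settles the remaining equivalences exactly as you do, via Proposition \ref{symcrit}, Theorem \ref{basicred} and Propositions \ref{easycon} and \ref{easyconorb}. Your explicit handling of the passage from $G$-invariance of the $X_f(\sigma_k)$ to local expressibility as analytic functions of the $\sigma_j$ fills in a detail the paper leaves implicit.
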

\begin{remark} {\em (a) Note that Lemma \ref{compredprep} and Proposition \ref{compred} provide a construction of reducible systems which does not require a priori knowledge of all symmetric systems. (Actually, finding all reducible systems here is less troublesome than finding all symmetric systems.) \\
(b) The polynomial $\theta$ is not necessarily $G$-invariant (contrary to the statement in \cite{HaWa}) but generally $\theta$ is the product of a $G$-invariant polynomial $\theta_1$ and  a polynomial $\theta_2$ with the property that $B_1v,\ldots,B_sv$ are linearly dependent for all zeros $v$ of $\theta_2$ (in $\mathbb C^n$).\quad$\diamond$}
\end{remark}
\noindent{\em Examples.} (a) Consider in $\mathbb R^3$ the system
\[
\dot x=f(x)=\alpha(x)\left(
\begin{array}{l}
x_2\\
-x_1\\
0
\end{array}\right)+\beta_1(x)\left(
\begin{array}{l}
x_1\\
x_2\\
0
\end{array}\right)+\beta_2(x)\left(
\begin{array}{l}
0\\
0\\
x_3
\end{array}\right).
\]
This representation corresponds to Proposition \ref{compred}(a), with the 1-dimensional subgroup $G$ of
$SO(3,R)$ generated by $g=(x_2,-x_1,0)$, on the set $\widetilde U$ defined by $(x_1^2+x_2^2)\cdot x_3\not=0$.
The 
system is $G$-symmetric if, and only if, $\alpha$ and the $\beta_i$ are functions of the generating invariants $\sigma _1=x_1^2+x_2^2$ and $\sigma_2=x_3$ alone. (This holds because the Lie algebra is abelian. Generally finding symmetric systems is a more involved matter.) The system is
reducible by the invariants $\sigma_1,\,\sigma_2$ of $g$ if  $\beta_1$ and $\beta_2$
are  functions of $x_1^2+x_2^2$ and $x_3$ alone , with $\alpha$  arbitrary.
Assuming e.g. $\beta_2\not=0$, the system is orbitally symmetric with respect to $g$ if and only if $\alpha/\beta_2$ and $\beta_1/\beta_2$ are functions of $\sigma_1$ and $\sigma_2$ alone, and the system is orbitally
reducible by the invariants of $g$ if and only if $\beta_1/\beta_2$ is a function of $\sigma_1$ and $\sigma_2$ alone.

\medskip
\noindent (b) We consider $G=SO(3,\mathbb R)$. On $\mathbb R^3$ the linear maps
\[
B_1(x)=\left(\begin{array}{c}-x_2\\
                                               x_1\\
                                                0\end{array}\right),\quad
B_2(x)=\left(\begin{array}{c}-x_3\\
                                               0\\
                                                x_1\end{array}\right),\quad
B_3(x)=\left(\begin{array}{c}0\\
                                              - x_3\\
                                                x_2\end{array}\right)
\]
span the Lie algebra $\cal G$, one has $s=2$, and may choose $B_1$ and $B_2$, since $B_1z$ and $B_2z$ are linearly independent for all $z\in\widetilde U:=\left\{x;\,x_1\not=0\right\}$.
The invariant algebra is generated by the polynomial $\sigma(x):=x_1^2+x_2^2+x_3^2$, and its gradient is equal to $q(x)=2x$. 
One has $\theta(x)=2x_1(x_1^2+x_2^2+x_3^2)$ and on the set given by $\theta\not=0$ any vector field can be written in the form
\[
f(x)=\alpha_1(x)\cdot B_1x+\alpha_2(x)\cdot B_2x+\beta(x)\cdot x.
\]
This vector field is reducible by $\sigma$ if and only if $\beta$ is group-invariant, thus can be written as a function of $\sigma$ alone. There is no restriction on orbital reducibility by $\sigma$, as was to be expected from Remark \ref{trivorbred}.

\section{Higher order equations}\label{Higher}
In this section we apply the results on first-order systems to ordinary differential equations of higher order, and thus gain a new perspective on the construction and reduction of equations admitting lambda symmetries. Recall the correspondence between higher-order equations and systems of first order:
Given a (single) non-autonomous equation of order $m+1>1$,
\begin{equation}\label{high}
x^{(m)}= p(t,x,\dot x,\ldots,x^{(m-1)}), 
\end{equation}
its solutions correspond to solutions of the first-order system
\begin{equation}\label{degonesys}
\begin{array}{lll}
\dot x_1&=& x_2\\
             &\vdots & \\
\dot x_{m-1}&=& x_m\\
\dot x_m&=& p(t,x_1,x_2,\ldots,x_m).
\end{array}
\end{equation}
Therefore symmetries of the first-order system will send solutions of \eqref{high} to solutions of \eqref{high}. In other words, orbital symmetries of the autonomous system
\begin{equation}\label{autodegonesys}
\begin{array}{lll}
\dot x_0&=&1\\
\dot x_1&=& x_2\\
             &\vdots & \\
\dot x_{m-1}&=& x_m\\
\dot x_m&=& p(x_0,x_1,x_2,\ldots,x_m),
\end{array}\quad \mbox{briefly   }\dot x = P(x),
\end{equation}
will send solutions of \eqref{high} to solutions of \eqref{high} (up to familiar identifications). This point of view is proposed in the monograph by Stephani \cite{Ste}. It may be worth noting (and has already been noted in special instances, e.g. by Nucci and Leach \cite{NuLea}) that conversely any first order system may locally be represented as a single higher order equation.
\begin{proposition}\label{raiseord}
Let a nonautonomous first-order system
\[
\dot z = q(t,z) \mbox{   on  }\widetilde U\subseteq \mathbb K\times \mathbb K^m
\]
be given, and let $(t^*,\,z^*)$ such that $q(t^*,\,z^*)\not=0$. Then there exist local coordinates $t,\,x_1,\ldots,\,x_m$ near $(t^*,\,z^*)$ in which the system takes the form \eqref{degonesys}.
\end{proposition}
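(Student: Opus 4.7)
The plan is to construct the coordinates $x_1,\ldots,x_m$ as successive Lie derivatives of a single scalar function along the autonomized vector field. Form
\[
X := \partial_t + \sum_{i=1}^m q_i(t,z)\,\partial_{z_i} \quad\text{on } \widetilde U.
\]
Since $X(t)\equiv 1$, the vector field $X$ is non-stationary everywhere, and the flow-box (straightening) theorem applies. Taking $\{t=t^*\}$ as a transverse section and $t-t^*$ as the flow parameter, one obtains analytic coordinates $(t,\zeta_1,\ldots,\zeta_m)$ on some neighborhood of $(t^*,z^*)$ in which $X=\partial/\partial t$ and $\zeta|_{t=t^*}=z$. In these coordinates $X^k\phi=\partial_t^k\phi$ (at fixed $\zeta$) for any analytic $\phi$.

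Next, set
\[
\phi(t,\zeta) := \sum_{k=1}^m \frac{(t-t^*)^{k-1}}{(k-1)!}\,\zeta_k, \qquad x_i := X^{i-1}(\phi)\quad (i=1,\ldots,m).
\]
A direct computation in the $(t,\zeta)$-coordinates yields $x_i\big|_{t=t^*}=\zeta_i$, so the Jacobian of the map $(t,\zeta)\mapsto (t,x_1,\ldots,x_m)$ at $(t^*,z^*)$ is the identity. By the inverse function theorem, $(t,x_1,\ldots,x_m)$ furnish local analytic coordinates near $(t^*,z^*)$. The chain relations are then automatic: for $i=1,\ldots,m-1$ one has $\dot x_i = X(x_i) = X^i(\phi) = x_{i+1}$, and $\dot x_m = X^m(\phi)$ is an analytic function on $\widetilde U$ which, rewritten in the new coordinates, supplies the desired $p(t,x_1,\ldots,x_m)$ and puts the system in the form \eqref{degonesys}.

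The main point requiring care is the choice of $\phi$: one needs $\phi, X\phi,\ldots, X^{m-1}\phi$ to be functionally independent transversally to the flow. Straightening reduces this to a Wronskian-type condition on a single transverse slice, and the explicit polynomial ansatz for $\phi$ trivializes it. It is worth remarking that the hypothesis $q(t^*,z^*)\neq 0$ is not actually needed for the coordinate construction itself, since $X$ is non-stationary regardless; rather it guarantees that the resulting order-$(m+1)$ equation is nontrivial, matching the ``nontrivial right-hand side'' convention indicated in the Introduction.
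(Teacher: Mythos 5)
Your proof is correct and follows essentially the same route as the paper's: autonomize, straighten the vector field, and take iterated Lie derivatives of an explicit polynomial $\phi$ whose Jacobian at the base point certifies a coordinate change. The only differences are cosmetic --- you straighten to $\partial_t$ and build $\phi$ as a polynomial in $t-t^*$ rather than in a $z$-variable (which, as you correctly note, makes the hypothesis $q(t^*,z^*)\neq 0$ superfluous for the coordinate construction), and your Jacobian at $(t^*,z^*)$ is actually unit triangular rather than literally the identity unless $z^*=0$, which of course still suffices for the inverse function theorem.
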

\begin{proof} With no loss of generality we have $t^*=0$ and $z^*=0$. Consider the autonomized system
\[
\frac{d}{dt}\left(\begin{array}{c}t\\
                                        z\end{array}\right)=\left(\begin{array}{c}1\\
                                        q(t,z)\end{array}\right)=:Q(t,\,z).
\]
We may assume that $Q(0,\,0)=(1,\,1,\,0,\ldots,\,0)^{\rm tr}$. By a straightforward variant of the straightening theorem we may furthermore assume that locally $Q(t,\,z)=(1,\,1,\,0,\ldots,\,0)^{\rm tr}$ is constant.
Now let 
\[
\phi(t,z):=\sum_{j=2}^m\frac{1}{(j-2)!}z_1^{j-2}z_j + \frac{1}{m!}z_1^m
\]
and define 
\[
x_k:=\frac{d^{k-1}\phi}{dz^{k-1}}=X_Q^{k-1}(\phi),\quad 1\leq k\leq m.
\]
At $z=0$ the Jacobian matrix of $(x_1,\ldots,\,x_m)^{\rm tr}$ as a function of $z$ equals
\[
\left(
\begin{array}{cccccc}0&1&0&\cdots&0&0\\
                                0&0&1&\cdots&0&0\\
                                    & & \vdots& & &\\
                                 0&0&0&\cdots& 0&1\\
                                  1&0&0&\cdots&0&0 \end{array}\right)
\]
and therefore we have a coordinate transformation $(t,\,z)\mapsto (t,x)$. By design one has $\dot x_j=x_{j+1}$ for $1\leq j\leq m-1$, and this proves the assertion.
\end{proof}
\begin{remark}\label{raiseordrem}{\em (a) The proof is constructive, to some point, even if $Q$ is given in general form (only assuming $Q(0,\,0)\not=0$). Take any function $\phi$ such that $\phi,\,X_Q(\phi),\,\ldots,X_Q^{m-1}(\phi)$ are
functionally independent, and choose the new variables accordingly. The above proof amounted to verifying the existence of such a function, and also makes clear that "almost every" function will satisfy this property. More precisely, in the local ring of analytic functions those $\phi$ which do not satisfy the independence property form a subset of positive (finite) codimension.\\
(b) The exceptional case of functionally dependent $\phi,\,X_Q(\phi),\,\ldots,X_Q^{m-1}(\phi)$ provides reduction in a direct manner. Let $\ell$ be maximal such that $x_1:=\phi,\,x_2:=X_Q(\phi),\,\ldots,x_\ell:=X_Q^{\ell-1}(\phi)$ are independent. Then $X_Q^{\ell}(\phi)$ may be expressed as a function of $x_1,\ldots, x_\ell$, whence one has reduction to an equation of order $\ell+1$ (or the associated system).\\
(c) This Proposition opens, in principle, a possibility to determine symmetries of first-order systems: Rewrite the system as a higher-order equation, and determine the point symmetries of the latter, which amounts to a fully algorithmic procedure. (Nucci and Leach \cite{NuLea} noted and used a variant of such an approach.) The drawback is, of course, that one will only find trivial symmetries in general. It could be interesting to explore the possibility of a systematic approach, which would include the question how to choose a suitable function $\phi$.\quad$\diamond$
}
\end{remark}
\subsection{Review: Prolongation in simple cases}

In most monographs on symmetries, such as Olver \cite{Olv}, or Krasil'shchik and Vinogradov \cite{KraVi }, the Lie point symmetries of higher order equations \eqref{high} are determined via the general procedure for prolongations of vector fields to jet spaces. The following shortcut works and is quite useful in our scenario; it is essentially taken from Stephani \cite{Ste}, Ch.~I, Section 3.5.

We are interested in vector fields $g$ on $\widetilde U$ such that $\left[g,q\right]=\mu\cdot q$ for some $\mu$. By geometric motivation one frequently considers only {\em infinitesimal point symmetries} of the higher order equation, which implies the projectability property
\begin{equation}\label{pointtrafo}
g(x)=\left(\begin{array}{l}
g_0(x_0,x_1)\\
g_1(x_0,x_1)\\
g_2(x_0,x_1,x_2)\\
\,\,\,\vdots\\
g_{m-1}(x_0,x_1,\ldots,x_{m-1})\\
g_m(x_0,x_1,\ldots,x_m)
\end{array}\right)
\end{equation}
for the infinitesimal symmetry of the associated system. Step-by-step evaluation of the condition $\left[g,q\right]=\mu\cdot q$ now yields
\[
\begin{array}{rclcl}
 & -& \partial_0g_0-x_2\cdot \partial_1g_0 &=&\mu\\
g_2&-&\partial_0g_1-x_2\cdot \partial_1g_1 &=&\mu x_2\\
g_3&-&\partial_0g_2-x_2\cdot \partial_1g_2-x_3\partial_2g_3 &=&\mu x_3\\
 & &  & \vdots& \\
g_m&-&\partial_0g_{m-1}-x_2\cdot \partial_1g_{m-1}-\ldots -x_m\partial_mg_{m-1} &=&\mu x_m\\
\end{array}
\]
Thus one may successively compute $\mu$, $g_2,\ldots, g_m$ from $g_0$ and $g_1$ and their derivatives. Evaluating the last entry of the Lie bracket, which has not been written down here, provides an overdetermined,  algorithmically accessible, system of partial differential equations for $g_0$ and $g_1$, and thus in effect the symmetry conditions. (This approach is of course equivalent to the usual prolongation procedure for point symmetries of \eqref{high}.)

\medskip
Generalizing to orbital reducibility, but keeping the geometric restriction \eqref{pointtrafo}, one deals with the lambda symmetries first considered by Muriel and Romero \cite{MuRo1}. The condition is 
\[
\left[g,q\right]=\mu\cdot q + \lambda\cdot g.
\]
Initially Muriel and Romero  \cite{MuRo1} (for geometric reasons) require $\lambda$ to be a function (possibly a priori unknown) of $x_0,\,x_1,\, x_2$ only. One obtains the "$\lambda$-prolongation formulas":
\[
\begin{array}{rclcl}
 & -& \partial_0g_0-x_2\cdot \partial_1g_0 &=&\mu + \lambda g_0\\
g_2&-&\partial_0g_1-x_2\cdot \partial_1g_1 &=&\mu x_2+ \lambda g_1\\
g_3&-&\partial_0g_2-x_2\cdot \partial_1g_2-x_3\partial_2g_3 &=&\mu x_3+ \lambda g_2\\
 & &  & \vdots& \\
g_m&-&\partial_0g_{m-1}-x_2\cdot \partial_1g_{m-1}-\ldots -x_m\partial_mg_{m-1} &=&\mu x_m+ \lambda g_{m-1}\\
\end{array}
\]
In this setting,  $\lambda$, $g_0$ and $g_1$ successively determine $\mu$, $g_2,\ldots, g_m$, and again the last entry of the Lie bracket identity will provide compatibility conditions, as was noted by Muriel and Romero \cite{MuRo1}. The determination of $\lambda$-symmetries is not a completely algorithmic procedure, and the artwork in this approach is to suitably determine (e.g. by educated guesses) $\lambda$ such that prolongation and evaluation yields nontrivial results.  Muriel and Romero's restriction imposed on $\lambda$ is of importance  for the construction of higher order differential invariants further on in \cite{MuRo1}. One should also note that these authors relaxed the geometric restrictions on $\lambda$ in a subsequent paper \cite{MuRo2}, thus making another step towards general orbital reduction with respect to a single vector field, albeit in the setting of a higher order equation. (See also Remark \ref{onegee}.)

In the following discussion and construction of reducible higher-order differential equations we emphasize the correspondence to first order systems. 

\subsection{Constructing reducible higher-order equations}
We propose a construction of reducible higher-order equations, thus extending the work by Muriel and Romero \cite{MuRo1,MuRo1a, MuRo2, MuRo3} on lambda symmetries. 
We use Theorem \ref{orborder} and Proposition \ref{easyconorb}, based on the correspondence to first-order systems established in Proposition \ref{raiseord}, starting from a system with known orbital symmetries. Thus, on the one hand,  as in Muriel and Romero \cite{MuRo2} the "lambdas" may depend on all variables, and moreover we do not necessarily restrict attention to point symmetries of higher order equations. On the other hand, we extend the framework of Muriel and Romero by considering more than one infinitesimal orbital symmetry. As noted earlier, we therefore work in the most general setting for (orbital) reducibility of the associated first order system. In contrast to Muriel and Romero \cite{MuRo1,MuRo2,MuRo3} our focus is on constructing reducible equations, rather than detecting reducibility in given equations. We emphasize that the following should be seen only as a first step towards a systematic construction of reducible equations with prescribed (joint) $\lambda$-symmetries.

Consider a single equation \eqref{high} of order $m+1$ and rewrite it as the autonomized system \eqref{autodegonesys}:
\[
\begin{array}{lll}
\dot x_0&=&1\\
\dot x_1&=& x_2\\
  &\vdots&\\
\dot x_{m-1}&=& x_m\\
\dot x_m&=& p(x_0,x_1,\ldots,x_m)
\end{array}
\]
Assume that for $k=1,\ldots,r$ the system admits infinitesimal symmetries
\[
g^{(k)}=\left(\begin{array}{c} g_0^{(k)}(x_0,x_1,\ldots,x_m)\\
                                       g_1^{(k)}(x_0,x_1,\ldots,x_m)\\
                                          \vdots \\
                                      g_m^{(k)}(x_0,x_1,\ldots,x_m)
\end{array}\right)
\]
which form an involution system. Then for any system of scalar functions $\nu  ^{(k)}$ the equation
\[
\dot x=\widehat H(x):=\left(\begin{array}{c} 1+\sum_k \nu^{(k)} g_0^{(k)}(x)\\
                                       x_2+\sum_k \nu^{(k)} g_1^{(k)}(x)\\
                                      \vdots     \\
                                         x_m+\sum_k \nu^{(k)} g_{m-1}^{(k)}(x)\\
                                      p(x)+\sum_k \nu^{(k)} g_m^{(k)}(x)
\end{array}\right)
\]
is orbitally reducible by the common invariants of the $g^{(k)}$.
Now consider the orbitally equivalent system 
\[
\dot x=H(x):=\frac1{1+\sum\nu^{(k)} g_0^{(k)}(x)}\widehat H(x)=:\left(\begin{array}{c} 1\\
                                       h_1(x)\\
                                      \vdots \\
                                        h_m(x)
\end{array}\right).
\]
and introduce new coordinates 
\[
t=x_0,\,y_1=x_1,\, y_2=h_1(x),\ldots
\]
which is always possible by Proposition \ref{raiseord}. In general one will thus obtain an equation of order $m+1$ for $y=y_1$. (The exceptional case when the functions $x_1,\,X_H(x_1),\ldots, X_H^{m-1}(x_1)$ are not independent implies reducibility via Remark \ref{raiseordrem}.) By construction, the equation $\dot x=H(x)$
is orbitally reducible by the common invariants of the $g^{(k)}$, and in the generic case this holds true for the system in new coordinates
$t,y_1,\ldots, y_m$ (with the invariants also written in new coordinates), and for the corresponding equation of order $m+1$. Thus we have constructed reducible higher order equations from symmetric ones. Note that if the coordinate transformation is given by $x=\Psi(y)$ then the "joint-$\lambda$ symmetries" in new coordinates are given by
\[
\tilde g^{(k)}(y)=D\Psi(y)^{-1}g^{(k)}(\Psi(y)).
\]

\begin{remark}{\em The special case when $r=1$ and the geometric restrictions on $g=g_1$ (and $\lambda$) hold is, naturally, of particular interest. It is possible to explicitly construct all $\lambda$-symmetric higher-order equations which are reducible by the invariants of $g$, assuming the latter are known. First, one knows that all symmetric higher-order equations are given by functions of the differential invariants of $g$ (with appropriate identifications; see e.g. Olver \cite{Olv}, Ch.~2), and second, by Corollary \ref{easyconorbcor} one can construct all reducible equations via the procedure outlined above. Here one should assume $\nu$ to be a function of $x_0$, $x_1$ and $x_2$ only, so that the same holds for  $\lambda$. \quad $\diamond$
}
\end{remark}

\begin{remark}\label{lamfind}{\em In the special case $r=1$ it is also of interest to identify the "$\lambda$" emerging from this procedure. Thus start with vector fields $f$ and $g$ such that $\left[g,\,f\right]=\alpha f$ with some scalar function $\alpha$. Given a scalar function $\nu$, form
\[
\widehat H(x) = f(x)+\nu g(x);\quad H(x)=\frac1{1+\nu g_0(x)}\widehat H(x).
\]
Straightforward computations show that
\[
\left[ g,\widehat H\right]=\alpha\widehat H+\left(X_g(\nu)-\alpha \nu\right)g; \quad \left[ g, H\right]=\left(\dots\right) H+\frac{\left(X_g(\nu)-\alpha \nu\right)}{1+\nu g_0}g;
\]
and the coefficient of $g$ in the second identity (rewritten in new coordinates via $x=\Psi(y)$) is the scalar function $\lambda$ as introduced in \cite{MuRo1}.
 \quad $\diamond$
}
\end{remark}

\subsection{Examples of order two}
Since orbital reduction to dimension one is of little interest (recall Remark \ref{trivorbred}), "joint $\lambda$" is of little interest here. Therefore we remain mostly within the framework of Muriel and Romero \cite{MuRo1,MuRo2}, considering the inverse problem of finding differential equations with prescribed reduction.
Rewrite a single second order equation as an autonomous system:
\[
\begin{array}{lll}
\dot x_0&=&1\\
\dot x_1&=& x_2\\
\dot x_2&=& p(x_0,x_1,x_2)
\end{array}
\]
Assume that 
\[
g=\left(\begin{array}{c} g_0(x_0,x_1,x_2)\\
                                       g_1(x_0,x_1,x_2)\\
                                      g_2(x_0,x_1,x_2)
\end{array}\right)
\]
is an infinitesimal orbital symmetry for this equation. Then for any scalar function $\nu$ the equation
\[
\dot x=\widehat H(x):=\left(\begin{array}{c} 1+\nu g_0(x)\\
                                       x_2+\nu g_1(x)\\
                                      p(x)+\nu g_2(x)
\end{array}\right)
\]
is orbitally reducible by the invariants of $g$, and every orbitally reducible system is obtained from an orbitally symmetric one in this way, due to Proposition \ref{easyconorb} and Corollary \ref{easyconorbcor}.
The orbitally equivalent system 
\[
\dot x=H(x):=\frac1{1+\nu g_0(x)}\widehat H(x)=:\left(\begin{array}{c} 1\\
                                       h_1(x)\\
                                      h_2(x)
\end{array}\right)
\]
remains orbitally reducible by the invariants of $g$. Now, unless $h_1$ depends on $x_0$ and $x_1$ alone, there is a local coordinate change
\[
\begin{array}{rcl} y_0&=& x_0\\
                              y_1&=& x_1\\
                               y_2&=& X_H(y_1)=h_1(x)
\end{array}
\]
and the system in new coordinates
\[
\dot y= H^*(y)=\left(\begin{array}{c} 1\\
                                       y_2\\
                                      p^*(y)
\end{array}\right)
\]
is orbitally reducible by the invariants of $g$, expressed in new coordinates. Thus we have obtained a reducible second-order equation from a symmetric one, and every equation which is orbitally reducible by the invariants of $g$ is obtained in this way. Generally, the method is not completely constructive, since an explicit computation of $p^*$ requires an explicit inverse to the coordinate transformation. Therefore we (have to) make special choices of functions in the concrete examples below. 

\medskip
\noindent{\it Example 1.} Consider the system
\begin{equation}\label{ex1}
\begin{array}{lll}
\dot x_0&=&1\\
\dot x_1&=& x_2 \\ 
\dot x_2&=& \gamma(x_2)/x_1
\end{array}
\end{equation}
Here $\gamma$ is an arbitrary analytic function of one variable.
This system admits the infinitesimal orbital symmetry
\[
g=\left(\begin{array}{c}  x_0\\
                                        x_1\\
                                       0
\end{array}\right)
\]
(which corresponds to a Lie point symmetry of the associated second-order equation). An independent set of invariants of $g$ is given by 
\[ \sigma_1=x_1/x_0 \quad\quad  \sigma_2=x_2.\]
Therefore, given any scalar function $\nu$ the equation
\[
\dot x=\widehat H(x):=\left(\begin{array}{c} 1+\nu x_0\\
                                       x_2+\nu x_1\\
                                      \gamma(x_2)/x_1
\end{array}\right)
\]
is orbitally reducible by these invariants,
and indeed one has:
\[ x_0 \dot\sigma_1:= x_0\cdot X_{\widehat H}(\sigma_1)=(\sigma_2-\sigma_1) \quad\quad  
x_0  \dot\sigma_2:=x_0\cdot X_{\widehat H}(\sigma_2)=\gamma(\sigma_2)/\sigma_1\]
i.e. Definition \ref{orbredudef} applies with $\mu=x_0 $.
The orbitally equivalent system 
\[
\dot x=H(x)=\frac1{1+\nu g_0(x)}\widehat H(x)=\left(\begin{array}{c} 1\\
  h_1(x)  
\\
  h_2(x)
\end{array}\right):=\left(\begin{array}{c} 1\\
                                       \frac {x_2+\nu x_1}  
{1+\nu x_0}\\
          \frac{\gamma(x_2)}{x_1(1+\nu x_0)}
\end{array}\right)
\]
remains orbitally reducible by the invariants of $g$. The 
coordinate change is now
\[
\begin{array}{rcl} y_0&=& x_0\\
                              y_1&=& x_1\\
                  y_2&=& h_1(x)=\frac{x_2+\nu x_1} {1+\nu x_0}.
\end{array}
\]

\medskip\noindent (i)
In the particular case of constant $\nu$, the system in new coordinates is
\begin{equation}\label{ex1nc}
\dot y= H^*(y)=\left(\begin{array}{c} 1\\
                                       y_2           \\
                                      p^*(y)
\end{array}\right) :=\left(\begin{array}{c} 1\\
                                       y_2           \\
                                      \frac 
{\gamma(x_2)}{y_1(1+\nu y_0)^2} 
\end{array}\right) 
\end{equation}
where 
\[x_2=y_2(1+\nu y_0)-\nu y_1.\]
This system is orbitally reducible by the invariants of $g$, expressed in new 
coordinates, and we note that $\lambda=\nu/(1+\nu)$ according to Remark \ref{lamfind}. The invariants are 
\[\widetilde\sigma_1=y_1/y_0 \quad\quad \widetilde\sigma_2=y_2(1+\nu y_0)-\nu y_1 
\]
and the orbital reduction is obtained with $\widetilde\mu=y_0(1+\nu y_0)$. 
In detail
\begin{equation}\label{ex1eq}
\begin{array}{rcl} \widetilde\mu\cdot X_{H^*}(\widetilde\sigma_1)&=& (\widetilde\sigma_2-\widetilde\sigma_1)\\
                             \widetilde \mu\cdot X_{H^*}(\widetilde\sigma_2)&=& \gamma(\widetilde\sigma_2)/\widetilde\sigma_1 
\end{array}
\end{equation}
Choose now for instance $\gamma=x_2^2$. Then Equation \eqref{ex1} gives the ODE $x\ddot x=\dot x^2$ which 
is easily solvable, whereas the ODE resulting from \eqref{ex1nc} (with $x_0=y_0=t$ and $y_1=y$) is
\begin{equation}\label{ex1last} y\ddot y(1+\nu t)^2=(\dot y+\nu t\dot y-\nu
y)^2
\end{equation}
which seems to be not solvable by standard methods. However, from \eqref{ex1eq}
with $\gamma=\widetilde\sigma_2^2$ one deduces 
\[ \frac {\widetilde\sigma_2}{ \widetilde\sigma_1}-\log \widetilde\sigma_2={\rm const}\]
which expresses a first integral for system \eqref{ex1nc}. On the other hand,
a first integral for this system corresponds to a first integral for
the  resulting second order ODE \eqref{ex1last}. Indeed, it can be checked that
\[\frac t y (\dot y+\nu t\dot y-\nu y)-\log(\dot y+\nu t\dot 
y-\nu y)={\rm const}\]
is satisfied.

\medskip\noindent (ii) In the particular case that $\nu=x_2$ one has
\[
y_2=x_2\left(\frac{1+x_1}{1+x_2x_0}\right),\quad
x_2=\frac{y_2}{1+y_1-y_0y_2}=:\phi(y),\quad\lambda=\frac{y_2}{1+y_1+y_2(1-y_0)}.
\]
The invariants, expressed in new coordinates, are now $y_1/y_0$ and $\phi(y)$.
From the third entry of $H^*$ one sees that 
\[
\dot y_2=\frac{\gamma(\phi)\cdot y_2}{y_1\cdot\phi}+\phi\cdot y_2-\phi\cdot\frac{y_1\cdot\phi + \gamma(\phi)\cdot y_0}{(1+\phi\cdot y_0)^2}
\]
As a particular example with $\gamma=0$ we obtain the second-order equation
\[
\ddot y=\left(\frac{1}{1-y-t\dot y}-\frac{y}{(1+y)^2}\right)\dot y^2
\]
which is reducible to a first order equation via the invariants of $g$.

\medskip
\noindent{\it Example 2.} The previous example still remains in the classical setting of lambda symmetries, since we started with a point symmetry. For an example in a more general setting, start with the simple system
\begin{equation}\label{ex12}
\begin{array}{lll}
\dot x_0&=&1\\
\dot x_1&=& x_2 \\ 
\dot x_2&=& 0
\end{array}
\end{equation}
which corresponds to $\ddot x=0$ and admits the infinitesimal symmetry
\[
g=\left(\begin{array}{c}  x_2\\
                                        x_1\\
                                       x_2
\end{array}\right)
\]
Given any scalar function $\nu$ the equation
\[
\dot x=H(x):=\frac{1}{1+\nu x_2}\left(\begin{array}{c} 1+\nu x_2\\
                                       x_2+\nu x_1\\
                                      \nu x_2
\end{array}\right)
\]
is orbitally reducible by the invariants of $g$. An independent set of invariants of $g$ is given by 
\[ \sigma_1=x_0 - x_2 \quad\quad  \sigma_2=x_2/x_1.\]
Let us consider the special case with $\nu =x_1$. The coordinate transformation then is given by 
\[
y_0=x_0,\quad y_1=x_1,\quad y_2=\frac{x_1^2+x_2}{1+x_1x_2},\quad x_2=\frac{y_2-y_1^2}{1-y_1y_2}
\]
and we have
\[
\lambda=\frac{y_1(1-y_1y_2)}{1-y_1^3}.
\]
In the $x$-coordinates we find
\[
\begin{array}{rcl}
X_H(\sigma_1)&=& 1/(1+x_1x_2)\\
X_H(\sigma_2)&=& -\sigma_2^2/(1+x_1x_2)
\end {array}
\]
and thus we have orbital reduction of  $\dot x=H(x)$ to the system
\begin{equation}\label{secordred}
\begin{array}{rcl}
\dot z_1&=& 1\\
\dot z_2&=& -z_2^2.
\end {array}
\end{equation}
We have a reduced (autonomous, in this particular case) one-dimensional equation $dz_2/dz_1=-z_2^2$.

Going to new coordinates $y$, the system $\dot x=H(x)$ is equivalent (by straightforward computation) to the second-order equation 
\[
\ddot y=\frac{(y\dot y -y^3)}{(1+y^3)^2}\left(1+y\dot y \right)^2+\frac{ \dot y-y^2+2y\dot y-\dot y^3-y^2\dot y^2}{1+y^3},
\]
which therefore is reducible to Equation \eqref{secordred} by the invariants $\widetilde \sigma_1$ and  $\widetilde \sigma_2$ (expressed in new coordinates). Note that $\dot x=H(x)$ is autonomous and therefore admits time translation as a symmetry, which is reflected in the one-dimensional orbitally reduced equation also being autonomous. (We chose this system for the sake of brevity, but note that the reduction to \eqref{secordred} is not a symmetry reduction.)
\subsection{An example of order three}

The essential point of this example is to illustrate nontrivial "joint lambda" symmetries. We start with the simple system
\begin{equation}\label{ex32}
\begin{array}{lll}
\dot x_0&=&1\\
\dot x_1&=& x_2 \\ 
\dot x_2&=& x_3\\ 
\dot x_3&=& 0
\end{array}
\end{equation}
corresponding to the third-order equation $x^{(3)}=0$.
This system admits the infinitesimal orbital symmetries
\[
g^{(1)}=\left(\begin{array}{c}  x_0\\
                                        x_1\\
                                       0\\
                                      -x_3
\end{array}\right),\,
g^{(2)}=\left(\begin{array}{c}  0\\
                                        x_1\\
                                       x_2\\
                                       x_3
\end{array}\right);\,\left[g^{(1)},\,g^{(2)}\right]=0,\, \left[g^{(1)},\,f\right]=f,\,\left[g^{(2)},\,f\right]=0.
\]
Note that $g^{(1)}$  and $g^{(2)}$ both are point symmetries.
A set of independent common invariants of $g^{(1)}$  and $g^{(2)}$ is given by
\[
\psi_1=x_0x_2/x_1,\, \psi_2=x_0^2x_3/x_1.
\]
We construct an equation that is reducible by the common invariants of $g^{(1)}$  and $g^{(2)}$. Let $\nu^{(1)}(x)=x_1$ and $\nu^{(2)}(x)=1/x_1$, thus
\[
\widehat H(x)=\left(\begin{array}{c}1+x_0x_1\\
                                                    x_2+x_1^2+ 1\\
                                                     x_3+x_2/x_1\\
                                                     -x_1x_3+x_3/x_1\end{array}
                                      \right)
\]
and
\[
 H(x)=\left(\begin{array}{c}1\\
                                                 \frac{   x_2+x_1^2+1}{1+x_0x_1}\\
                                                    \frac{ x_1x_3+x_2}{x_1(1+x_0x_1)}\\
                                                   \frac{  (1-x_1^2)x_3}{x_1(1+x_0x_1)}\end{array}
                                      \right)
\]
Let us look at the reduction first. One finds
\[
\begin{array}{rcl}
X_{\widehat H}(\psi_1)&=&\frac{1}{x_0}\left(\psi_1-\psi_1^2-\psi_2\right)\\
X_{\widehat H}(\psi_2)&=&\frac{1}{x_0}\left(2\psi_2-\psi_1\psi_2\right)\\
\end{array}
\]
and therefore $\psi_1$ and $\psi_2$ provide an orbital reduction of $\widehat H$ (as well as of $H$) to the autonomous two-dimensional system
\begin{equation}\label{jointlamred}
\begin{array}{rcl}
\dot z_1&=& z_1-z_1^2+z_2\\
\dot z_2&=& 2z_2-z_1z_2,
\end{array}
\end{equation}
which may be rewritten as a non-autonomous first order equation:
\[
\frac{dz_2}{dz_1}= \frac{2z_2-z_1z_2}{z_1-z_1^2-2z_2}
\]
The example was primarily chosen to obtain an explicitly invertible coordinate transformation towards the third-order equation. Since some expressions are somewhat unwieldy, we will write them down only in an abbreviated version. Passing to new coordinates, we set 
\[
y_0=x_0,\,y_1=x_1,\, y_2=X_H(y_1)=\frac{1+y_1^2}{1+y_0y_1}+\frac{x_2}{1+y_0y_1}
\]
which yields
\[
x_2=y_2(1+y_0y_1)-(1+y_1^2).
\]
Using the "hybrid" expressions
\[
X_H(x_2)=\frac{x_3}{1+y_0y_1}+\frac{x_2}{y_1(1+y_0y_1)}, \quad X_H(x_3)=\frac{1-y_1^2}{y_1(1+y_0y_1)}x_3,
\]
routine calculations provide
\[
y_3=\frac{-y_1^2-y_0y_1y_2+2y_1^2y_2-y_1^4+y_0y_1^3y_2+(1-y_1^2-y_0y_1y_2)x_2+y_1x_3}{y_1(1+y_0y_1)^2},
\]
which (being linear in $x_3$) can easily be solved for $x_3$ as a function of $y_0,\ldots,y_3$. Taking the Lie derivative $X_H(y_3)$ and
making the usual identifications, one obtains a (lenghty) third order equation for $y=y_1$, which can be reduced to \eqref{jointlamred} by the invariants $\tilde \psi_1$ and $\tilde\psi_2$ (expressed in the $y$-coordinates).

The example shows that the construction of nontrivial reducible higher order equations via prescribed "joint lambda" symmetries is feasible. But it also illustrates that work remains to be done towards a systematic approach.


\end{document}